\documentclass[review,authoryear]{elsarticle}
\pdfoutput=1
\usepackage{pdflscape}
\usepackage{graphicx}
\usepackage{algorithm}
\usepackage{algorithmicx}
\usepackage{listings}
\usepackage{amssymb}
\usepackage{amsmath}
\usepackage{amsfonts}
\usepackage{natbib}
\usepackage{bm}
\usepackage{xr}

\usepackage{mathrsfs}
\usepackage{amscd}
\usepackage[mathscr]{eucal}
\usepackage{extarrows}
\usepackage{color}
\usepackage{nameref}

\usepackage{fixltx2e}
\usepackage{textcomp}
\usepackage{fullpage}
\usepackage{amsfonts}
\usepackage{verbatim}
\usepackage[english]{babel}
\usepackage{pifont}
\usepackage{color}
\usepackage{setspace}
\usepackage{lscape}
\usepackage{indentfirst}
\usepackage[normalem]{ulem}
\usepackage{booktabs}
\usepackage{nag}
\usepackage{float}
\usepackage{latexsym}
\usepackage{url}
\usepackage{hyperref}
\usepackage{epsfig}
\usepackage{array}
\usepackage{ifthen}
\usepackage{amsthm}
\usepackage{amstext}
\usepackage{enumitem}


\usepackage{longtable}
\usepackage[tablesonly, notablist,  nomarkers, tabhead]{endfloat}

\usepackage{footnote}
\usepackage[para]{footmisc}
\usepackage{tikz}
\usepackage{fp}   
\usetikzlibrary{fpu}
\usepackage{nameref}
\usepackage{siunitx}
\usepackage{boxhandler}

\usepackage{xr}

\newcommand{\pkg}[1]{\textbf{#1}}
\newcommand{\proglang}[1]{\texttt{#1}}
\newcommand{\code}[1]{\texttt{\textit{#1}}}

\newcommand{\be}{\begin{equation}} \newcommand{\ee}{\end{equation}}
\newcommand{\bd}{\begin{displaymath}} \newcommand{\ed}{\end{displaymath}}
\newcommand{\ba}{\begin{align}} \newcommand{\ea}{\end{align}}
\newcommand{\baa}{\begin{align*}} \newcommand{\eaa}{\end{align*}}
\newcommand{\ben}{\begin{enumerate}} \newcommand{\een}{\end{enumerate}}
\newcommand{\bi}{\begin{itemize}} \newcommand{\ei}{\end{itemize}}

\newcommand{\ud}{\mathrm{d}}
\newcommand{\E}[1]{\operatorname{E}\left[ #1 \right]}
\newcommand{\Expectation}[1]{\operatorname{E}\left[ #1 \right]}

\newcommand{\var}[1]{\operatorname{Var}\left[ #1 \right]}
\newcommand{\variance}[1]{\operatorname{Var}\left[ #1 \right]}

\newcommand{\KBcom}[1]{}
\newcommand{\KBtext}[1]{\textcolor{black}{#1}}
\newcommand{\TS}[1]{\textcolor{black}{#1}}
\newcommand{\TScom}[1]{}
\newcommand{\VMcom}[1]{}
\newcommand{\VM}[1]{\textcolor{black}{#1}}
\newcommand{\GAcom}[1]{}
\newcommand{\del}[1]{}

\usepackage[normalem]{ulem}

\newtheorem{theorem}{Theorem}
\newtheorem{definition}{Definition}
\newtheorem{remark}{Remark}
\newtheorem{corollary}{Corollary}

\usepackage{bigints}

\journal{}

\begin{document}

\begin{frontmatter}

\title{Fast likelihood evaluation for multivariate phylogenetic comparative methods: the \pkg{PCMBase} \proglang{R} package}

\author[ETH,SIB]{Venelin Mitov\corref{cor1}}
\ead{vmitov@gmail.com}

\author[LiU]{Krzysztof Bartoszek}
\ead{krzysztof.bartoszek@liu.se,bartoszekkj@gmail.com}

\author[ETH]{Georgios Asimomitis}
\ead{georgios.asimomitis@gmail.com}

\author[ETH,SIB]{Tanja Stadler}
\ead{tanja.stadler@bsse.ethz.ch}

\cortext[cor1]{Corresponding author}
\address[ETH]{Department of Biosystems Science and Engineering, Eidgenossische Technische Hochschule Z\"urich, Basel, Switzerland}
\address[SIB]{Swiss Institute for Bioinformatics, Lausanne, Switzerland}
\address[LiU]{  Department of Computer and Information Science, Link\"oping University, Link\"oping, Sweden}

\begin{abstract}

We introduce an \proglang{R} package, \pkg{PCMBase}, 
to rapidly calculate the likelihood for multivariate phylogenetic comparative methods. 
The package is not  specific to  \TS{particular models but offers} the user the functionality to very easily implement a 
wide range of models where the transition along a branch is multivariate normal.
We demonstrate the \TS{package's possibilities on the now standard, multitrait Ornstein--Uhlenbeck 
process as well as the novel multivariate punctuated equilibrium model. The package can handle trees of various
types (e.g.  ultrametric, nonultrametric, polytomies, e.t.c.)}, as well as measurement error, missing measurements 
or non-existing traits for some of the species in the tree. 

\end{abstract}

\begin{keyword}
multivariate traits \sep linear time algorithm \sep pruning \sep missing data \sep species trait evolution \sep tree 
\end{keyword}

\end{frontmatter}

\section[Introduction]{Introduction} \label{sec:intro}
Since \citet{Felsenstein:1985bl}'s work describing the independent contrasts algorithm, phylogenetic comparative
methods (PCMs) have steadily been generalized with respect to available models and implementations of them. 
Following \citet{Felsenstein:1988gb}'s suggestion, \citet{Hansen:1997ek} described the Ornstein--Uhlenbeck (OU) process
in the PCM setting. This led to the implementation of OU models in various
packages such as \pkg{ouch} \citep{Butler:2004ce} or \pkg{geiger} \citep{Harmon:2008iy} to name a few,
making it a standard model in the community alongside the Brownian motion (BM) process
popularized in the community by \citet{Felsenstein:1985bl} but see also \citet{Edwards:1970td,LANDE:1976ga}. 
\TS{For species being characterized by multiple traits, 
 the  multivariate OU processes was introduced} by 
\proglang{R} packages such as \pkg{ouch}, \pkg{slouch} \citep{Hansen:2008gt}, \pkg{mvSLOUCH} \citep{Bartoszek:2012fw},
\pkg{mvMORPH} \citep{Clavel:2015hc}, \pkg{Rphylopars} \citep{Goolsby:2016fp}, again, to name a few. 
At the core of these methods, the likelihood of the model parameters and tree for given trait data  is 
evaluated, meaning the probability density of the tip trait values given the parameters and tree is calculated.

From a statistical point of view, 
the development of phylogenetic comparative methods
goes in two directions. The first direction is development of model classes
beyond simple stochastic processes, such as BM and OU,
and the second direction is the development 
of efficient likelihood evaluation methods. 
Considering the first direction, we briefly mention three recent 
proposals. \citet{Manceau:2016kz}
show (with implementation in \pkg{RPANDA}) that if one models 
the suite of traits by a linear stochastic differential equation
\citep[SDE, see the representation by Eq. (1) of][]{Manceau:2016kz}
whose drift matrix (``deterministic part'' of the SDE) is piecewise constant
with respect to the phylogeny, and diffusion matrix 
(``random part'', sometimes referred to as ``random drift part'' in biological literature) 
does not depend on the trait, then the tip measurements are multivariate normal.
The tip measurements' mean vector and covariance matrix can be found by integrating (backwards along the tree)
an appropriate collection of ordinary differential equations (ODEs), which in turn is required for the likelihood calculation. 
\citet{Landis:2012hu, Duchen:2017fm} went beyond the SDE world, represented through the equation,

\be \label{eqSDE}
\ud \vec{X}_{t} = \mu(t,\vec{X}_{t})+\sigma(t,\vec{X}_{t})\ud \vec{W}_{t},
\ee
where $\vec{W}_{t}$ is a standard multivariate Wiener process, 
into L\'evy process models. These are highly relevant from a biological point of view as they
allow for jumps in the trait at random time instances. Hence, they hold promise for attacking
the longstanding question of whether ``evolution is gradual or punctuated?''.
Both approaches consider the transition densities, meaning the change of a trait between the 
start and the end of a branch, when quantifying trait evolution along a phylogeny.
The third approach is 
to model the evolution of the traits' density in time with a partial differential
equation \citep[PDE][]{Boucher:2018hi, Blomberg:2017bd}. 
E.g. in the simplest standard Wiener process case \TS{the PDEs are}
$\frac{\partial}{\partial t}f_{t}(x) = \frac{1}{2}\frac{\partial^{2}}{\partial x^{2}}f_{t}(x)$ 
with boundary condition $f_{0}(x)=\delta_{0}(x)$, i.e. Dirac $\delta$ at $0$.
\VM{\del{This approach is convenient as it is next to impossible to analytically express the 
transition density.}} \TScom{I dont understand this last sentence, eg fitzjohn and you provide the transition 
densities. generally if multivariate normal, we can evaluate the likelihood, as also stated in next paragraph...
thus, delete this sentence?}
\del{ the dynamics of its changes with time are possible 
to describe by means of an appropriate PDE}
\del{Then, some, especially long--term properties (such as stationarity, if it exists), can be obtained  sometimes. }

The other direction is the 
development of efficient likelihood evaluation methods.
Commonly, in PCMs, the model classes have the property that the joint distribution
of the tip measurements is multivariate normal. Hence, there is a closed
form for the likelihood---the multivariate normal density function, 
i.e. an algebraic expression in terms of the traits' mean vector 
and the traits' variance-covariance matrix $(\mathbf{V})$.
Even though it is possible to obtain a conceptually simple equation, actually
calculating the
value of the likelihood is a computational challenge. 
If one has multiple correlated trait measurements per species, 
then 
$\mathbf{V}$ 
can have a very complicated formula \citep[cf. Eqs (A.1, B.3, B.7) of][]{Bartoszek:2012fw}. 
\del{Working with the between--species--between--traits
variance--covariance matrix 
explicitly is a non--trivial exercise. }
As \citet{Freckleton:2012hx} points out ``First, the matrix has to be generated 
in the first place. This requires allocating enough memory to hold all of the entries of
$\mathbf{V}$ and then initiating one traversal (i.e. successively visiting all the nodes)
of the phylogeny per pair of species sharing an ancestor to measure the shared path lengths.
Second $\mathbf{V}$ has to be inverted at one point
in the analysis.''. 

Hence, effort has been invested into 
reducing the memory and time complexity of 
the likelihood evaluation process.
Inspired by \citet{Felsenstein:1973tw}'s approach, \citet{Freckleton:2012hx} proposed a linear 
(w.r.t. the number of tips of the phylogeny) way to
obtain the likelihood for traits evolving as a Brownian motion. 
\citet{Freckleton:2012hx}, further indicates that non--Brownian models can be quickly evaluated
if one appropriately transforms the phylogeny. 
Then, \citet{Ho:2014ge} proposed a general method that takes advantage of the so--called
$3$--point structure of the Brownian motion's between--species--between--traits
variance--covariance matrix
\citep[i.e. a matrix $\mathbf{S}$ has a $3$--point structure if it is symmetric, with 
non--negative entries and for all $i$, $j$, $k$ (possibly equal), 
the two smallest of 
$\mathbf{S}_{ij}$, $\mathbf{S}_{ik}$ and $\mathbf{S}_{jk}$ 
are equal][]{Ho:2014ge}
 and \del{can} obtain the likelihood in linear (w.r.t. the number of tips 
of the phylogeny) time, without having to construct in quadratic time the matrix $\mathbf{V}$.
\TScom{shall we include here the definition of a 3 point structure? If not, we should aslo delete the sentence 
on the definition of the generalized 3-point structure.}
Similarly, calculating the likelihood for
non--Brownian models (like the univariate Ornstein--Uhlenbeck process) can be done in linear time, as long as 
their $\mathbf{V}$ satisfies a generalized 
$3$--point structure.
\TS{Briefly, a covariance matrix satisfies the generalized $3$--point structure if there exist diagonal
matrices $\mathbf{D}_{1}$ and $\mathbf{D}_{2}$ such that $\mathbf{D}_{1}\mathbf{V}\mathbf{D}_{2}$
satisfies the $3$--point structure.} 
\citet{Goolsby:2016fp} \TS{derives such a transformation} to find the likelihood for traits under multivariate 
Ornstein--Uhlenbeck evolution in linear time.
But in their implementation, only ultrametric trees and symmetric--positive--definite drift matrices are supported
at the moment. 
For non--Gaussian models,
a quasi--likelihood is defined and again the same approach (as long as the generalized $3$--point
structure holds) \TS{can be} used \citep{Ho:2014ge}. \TScom{I moved this sentence down, as in my understandign 
Goolsby falls under Gaussian models.}

The speed--up for the Brownian motion's $3$--point structure \TS{(or generalized 3-point structure)} is based on the fact 
that the between--species--between--traits variance--covariance matrix
has a nested structure. Therefore,
appropriate linear algebra allows for rapid calculation of 
$\mathrm{det}(\mathbf{V})$ and quadratic forms like
$\vec{x}\mathbf{V}^{-1}\vec{y}$ \del{, i.e.} without the need to do the inversion $\mathbf{V}^{-1}$.
\del{, 
an $O(p^{2.373})$ operation at least, where $p$ is the dimension of $\mathbf{V}$
} 
\del{Non--Brownian based models, and if the tree is  non--ultrametric (e.g. it has fossil measurements), 
can also take advantage of this method, after appropriate transformations of the phylogeny. }

Even though linear--time likelihood evaluation based on the $3$--point structure is mathematically
elegant, it is, due to the necessity of finding an appropriate transformation for non--Brownian motion,
intrinsically complicated and may seem daunting for a non--algebraically oriented user or developer.
\citet{FitzJohn:2012ey} indicated a probabilistically motivated way of quickly finding the likelihood
(with implementation in the \pkg{Diversitree} \proglang{R} package). \TScom{can we say this approach is also linear
 (as it only traverses each branch once)?}
He noticed (in the Supporting Information), same as \citet{OPybus:2012pnas}, that one can traverse the tree and successively
integrate out the internal nodes. \citet{FitzJohn:2012ey}'s description was focused around the BM and 
univariate OU processes on ultrametric trees. Furthermore, \citet{FitzJohn:2012ey} writes that he proved correctness 
of his method for a three
tip phylogeny and then for larger trees checked numerically.

The presence of two different approaches, namely the 3--point structure method and the tree traversal method,
 to quickly calculating the likelihood combined with a number of 
independent implementations, each with some given set of conditions, can easily cause confusion.
In fact, it seems that this 
led \citet{Slater:2014hy} to write in his Correction (due to ``$\ldots$ errors
arose from use of branch length rescaling under the Ornstein--Uhlenbeck process, which I here show to be 
inappropriate for non--ultrametric trees''), that ``$\ldots$, there is little, if any documentation in the 
literature or elsewhere highlighting that one of these approaches can be used while another cannot.''

In this paper
we attempt to 
\VM{\del{assess} overcome} the difficulties highlighted in the previous paragraph by proposing
a \TS{fast} method to obtain the likelihood \TS{which integrates} over the internal node values.
Our approach is appropriate for a large class of models, namely for all models where
conditional on the ancestral trait, the descendant
trait is normally distributed \VM{\del{(however, we indicate \TS{in the Discussion} that substantial relaxations of 
this are possible)}},
the descendant's expectation
depends linearly on the ancestor, and the variance does not depend on the ancestral value. 
From a mathematical point of view, we provide an inductive proof of \citet{FitzJohn:2012ey}'s claim of method correctness
for multiple traits and all kinds of trees.
 \citet{OPybus:2012pnas} point out that for such a method to work, it is needed
``to keep track of partial'' means and precisions. Here, we propose a very general, computationally
effective, and developer friendly way of doing this by recursively updating the polynomial
representation of the multivariate normal density function.
\TS{In order to use our approach for some new model, one has to be able to calculate} \del{What is required is 
to provide a way to calculate} the variance of the transition along the branch,
the shift in the mean along a branch, and the linear dependency (i.e. a matrix) on the ancestral state. 
\TS{Thus, in} our probabilistic approach, one needs to understand only the dynamics
of a single branch (lineage), 
something that is usually present at the model formulation stage.
For OU based models, these \TS{quantities can be} \del{are} analytically calculated and we provide an implementation. 
For other models, a developer will have to do the calculations themselves, but \del{we believe
that this is} \TS{this should be} significantly less involved than finding the transformation for the 
$3$--point structure. 
In fact, for SDE--type  models, \citet{Manceau:2016kz} provide a general ODE method
(Eqs. S2 and S3) to obtain the conditional mean and variance.
Furthermore, our method can naturally handle measurement error (intra--species variability), missing data, and
punctuated components (jumps), and allows for changes in parameters at arbitrary points along the tree. 
It is \TS{further} appropriate for non--ultrametric,
binary and multifurcating trees. All of such specifications can be provided by the user.
In no case is any tree transformation required. 

Our method 
encompasses a number of contemporary
frameworks. In particular all OU type models (e.g. \pkg{ouch}, \pkg{slouch}, \pkg{mvSLOUCH}, \pkg{mvMORPH})
are covered by it. The \pkg{RPANDA} SDE framework (without interactions between lineages) is also covered
as are current punctuated equilibrium models \citep[OU along a branch with a normal jump, denoted JOU][]{Bartoszek:2014ca,Bokma:2002hk}.
To the best of our knowledge, our implementation handles the widest class of BM-- and OU--based models 
on the widest set of phylogenetic trees, including non--ultrametric and non--binary trees.

It is important to stress here one point about the presented methodology and accompanying package. 
Our aim is not to provide a complete inference framework.
Rather we provide an efficient
way to evaluate the likelihood for a phylogenetic comparative data set given a user--defined model. 
The user can then on top of our package optimize over the parameter space to find the maximum likelihood 
estimates or perform a Bayesian analysis. 
In \cite{Mitov:2018pnas}, 
we use the framework presented here to quantify the evolution of brain-body mass allometry  in mammals.
\del{Quick and accurate evaluation of the likelihood is one of 
the most challenging components in an PCM analysis. It is either very involved
(our implementation or the $3$--point structure of \pkg{Rphylopars}) or results in 
painfully slow code, i.e.
one uses the multivariate normal density formula directly. Of course, for the likelihood
calculation to be efficient (linear in terms of number of tips) we need to make some assumptions, 
and the only one we have is that the transition density along a branch is multivariate normal
with expectation linearly dependent on the ancestor and variance independent of the ancestral state.
However, we already pointed out that this can be substantially relaxed.
The price to pay for relaxations will not be in the the computational
complexity of the likelihood calculations but rather in a more involved, than now, user interface. 
In summary, we envision the likelihood calculations within our package to be used together with an 
optimization tool in order to find the parameters best explaining the observed traits.} 
\TScom{didn't we already say all this above}

The rest of the paper is organized as follows. In Section \ref{secFastPCM}, we describe in detail 
our fast computational framework for phylogenetic comparative methods.  \TScom{mention also Sec 3 here?} 
\VM{In Section \ref{secSoftware} we present the \pkg{PCMBase} \proglang{R}--package.}
Then, in  Section \ref{secSpecialIssues} we describe how one can
\del{smoothly} handle issues such as missing values, measurement error, \TS{punctuated components,  
trees with polytomies, as well as sequentially sampled data (such as fossil data) leading to non--ultrametric trees.} 
Next, in Section \ref{secpcmOU}, we discuss the \del{, now} standard Ornstein--Uhlenbeck setup and describe examples of  
model classes that are already provided within our package. Two \TS{widely used models}---the
multivariate Brownian motion and multivariate Ornstein--Uhlenbeck processes and \TS{a novel model---
a multivariate Ornstein--Uhlenbeck model with jumps} are provided.
It should be noted that even though we call the BM and OU standard PCM models, our implementation
goes beyond what can be usually found in implementations: 
\TS{First, we allow for non-ultrametric trees}. 
\TS{Second, the} only assumption that we make on the drift matrix
(i.e. ``deterministic part'' of the SDE) is that it has to be eigendecomposable. This is in contrast to 
the assumption of this matrix being not only eigendecomposable but also non--singular (e.g. \pkg{Rphylopars}, 
\pkg{mvMORPH}, \pkg{mvSLOUCH}---but some exceptions to this are permitted). 
In Section \ref{secPostQuant} we report a technical validation test of the likelihood calculations
and Section \ref{secDiscussion} is a discussion.. 

\section[Fast phylogenetic computational framework]{Fast phylogenetic computational framework}\label{secFastPCM}
\subsection{Phylogenetic notation}
We assume that we are given a rooted phylogenetic tree $\mathbb{T}$ representing the ancestral relationship between $N$ 
species associated with the tips of the tree (fig. \ref{figTreeNotation}). 
We denote the tips of the tree by the numbers $1,\ldots,N$,
the internal nodes by the numbers $N+1,\ldots,M-1$ 
(where $M$ is the total number of nodes in the tree)
and the root-node by $0$.
For any internal node $j$, we denote by $Desc(j)$ the set of its direct descendants.
We denote by $\mathbb{T}_{j}$ the subtree rooted at node $j$.
We denote by $t_{j}$ the known length of the branch in the tree
leading to any tip or internal node $j$. 
By convention, we assume that time increases in the direction from the root to the tips of the tree, 
and $t_{j}$ are positive scalars.

\begin{figure}[t]
\begin{center}
\includegraphics[width=0.5\textwidth]{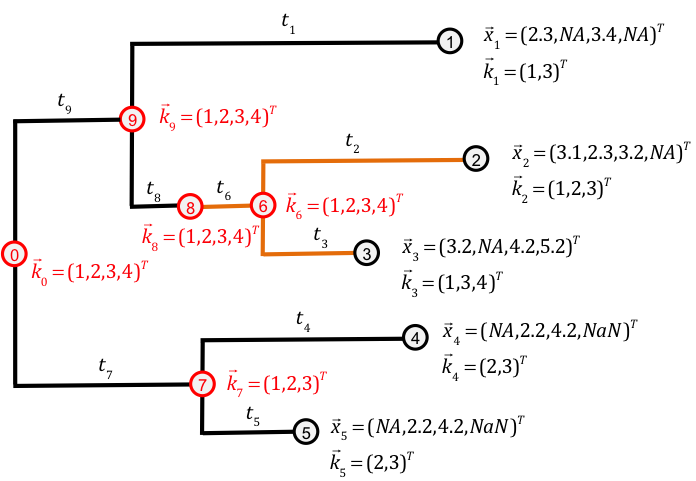}
\end{center}
\caption{A phylogenetic tree with observations at the tips. 
Numbered circles in black indicate the tips with observed trait vectors 
$\vec{x}_{1},\ldots,\vec{x}_{N=5}$. 
Missing measurements are denoted as \code{NA} (Not Available), 
while non-existing traits are denoted as \code{NaN} (Not a Number). Numbered circles in red indicate the root, 
$0$, and the internal nodes $6,\ldots,9$, for which the trait vectors are unknown. The  vectors, $\vec{k}_{i}$, 
denote the active coordinates for every node - for a tip-node these are all observed (neither \code{NA} nor \code{NaN}) 
coordinates; 
for an internal node, these are all the coordinates denoting traits that exist (are not \code{NaN}) for at least one of the 
tips descending from that node. The length of a branch leading to a tip or an internal node is known and denoted by 
$t_{i}$, $i=1,\ldots,9$. The change in branch color from black to orange at the internal node $8$ denotes the change 
to a different evolutionary regime. It is assumed that such a regime change occurs simultaneously for all traits.}
\label{figTreeNotation}
\end{figure}

The object of all phylogenetic models discussed here will be a suite of $k$ quantitative (real--valued) 
traits characterizing the $N$ species. Associated with each tip, $i$, there is a real $k$--vector, $\vec{x}_{i}$, of 
measured values for the $k$ traits. For some species, some trait measurements can be missing, reflecting two possible 
cases: 
\begin{itemize}
\item the trait exists but was not measured for that species, denoted as  \code{NA} (Not Available);
\item the trait does not exist for that species denoted as \code{NaN} (Not a Number) (fig. \ref{figTreeNotation}). 
\end{itemize}

We introduce algebraic notation that will hold for the rest of the paper. Scalars
are denoted by lower case letters, e.g. $f$, vectors are indicated by the arrow notation, e.g.
$\vec{\theta}$, while matrices are denoted
as upper case bold letters, e.g. $\mathbf{H}$. An exception to this is $\mathbf{X}_{j}$, meaning 
the set of measurements at the tips descending from an 
internal node $j$ of the tree. 

\VM{\subsection{Phylogenetic models of continuous trait evolution}}
We assume that the  trait values measured at the tips of the tree result from a \VM{continuous time continuous state--space} Markovian process evolving on 
top of the branching pattern in the tree.
By this we mean that along any given branch we have a trajectory
following the law of the process. Then, at speciation, the process
``splits'' into two processes. \VM{Both processes inherit the last
value of their parent process. \del{These daughter processes evolve
independently from their point of splitting. a}After the branching points,
there is no interaction between the processes. This entails that all the dependencies
between the values at the tips come from the time between the origin of the tree
and the most recent common ancestor for each pair of species. Exactly
how this shared time of evolution is translated into a dependency depends
on the assumed process. A widely used example of such trait process is the Ornstein--Uhlenbeck process illustrated 
in Fig. \ref{figPCMproc}.}

\begin{figure}[!ht]
\begin{center}
\includegraphics[width=0.4\textwidth,angle=270,origin=c]{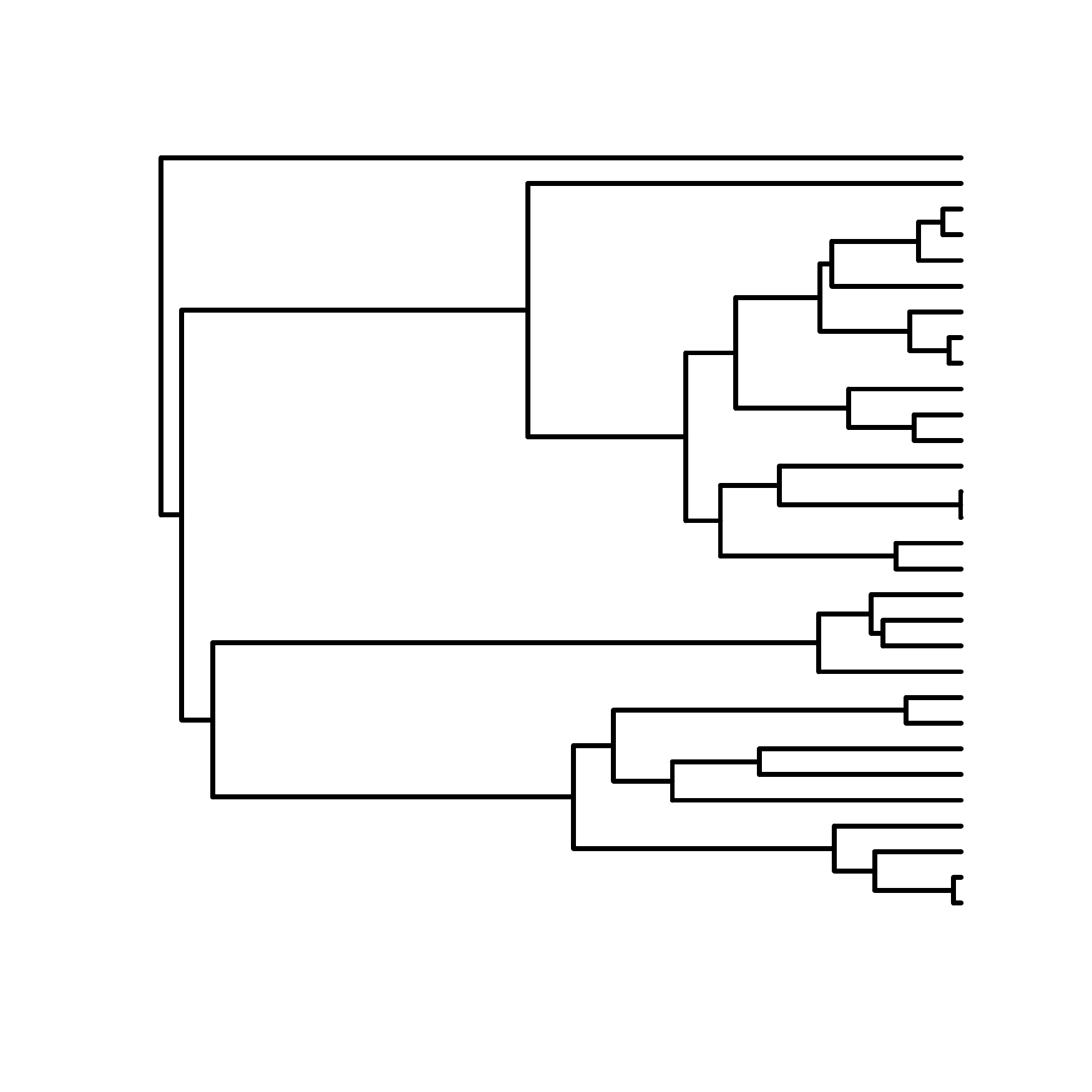}
\includegraphics[width=0.4\textwidth]{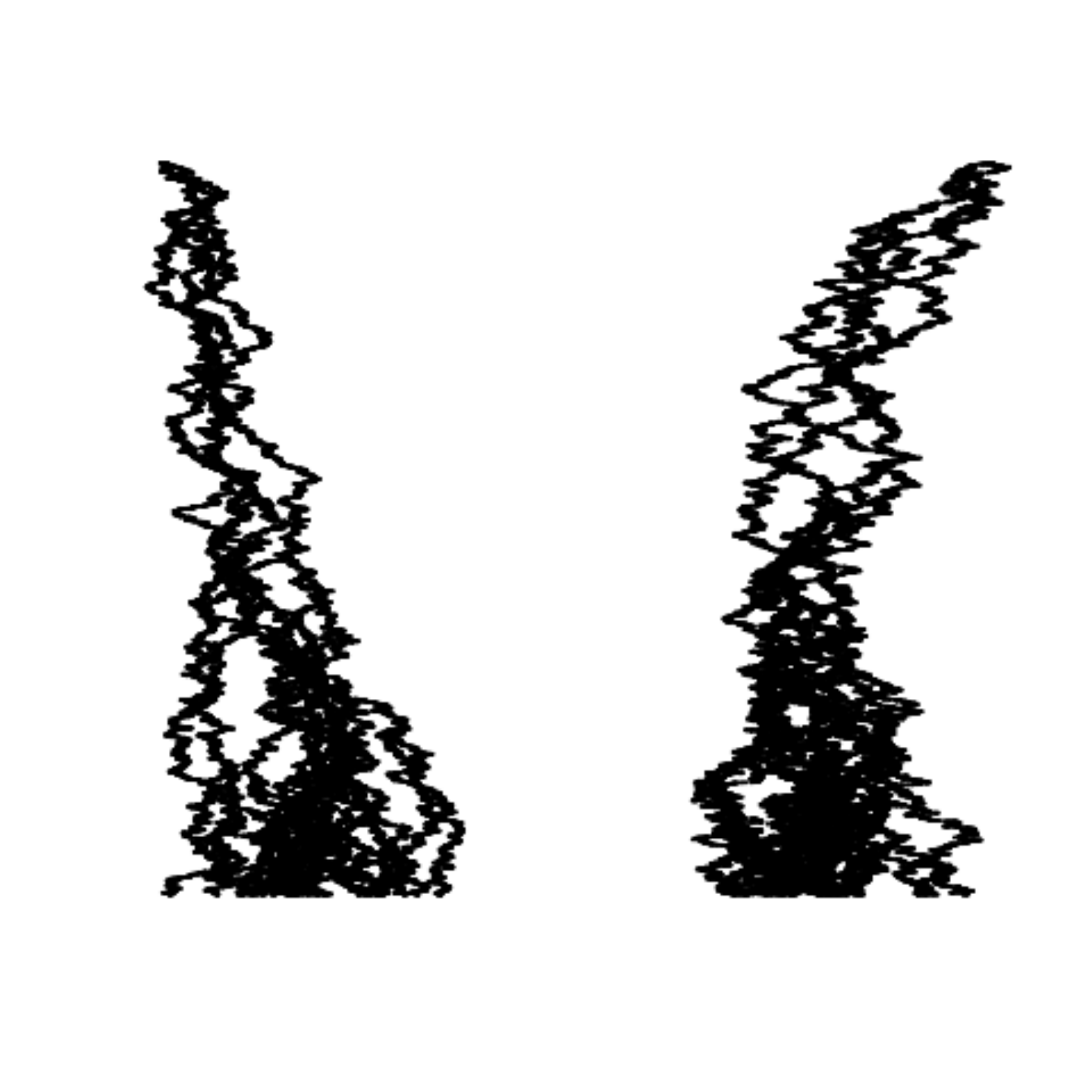}
\end{center}
\caption{
Simulation of a bivariate OU process on top of a pure birth tree with $30$ tips. 
The two traits are displayed on separate panels. 
The tree was simulated using the \pkg{TreeSim} package \citep{Stadler:2009ei,Stadler:2011ig}, its height is $3.201$.
The bivariate OU process was simulated using \pkg{mvSLOUCH} \citep{Bartoszek:2012fw}
with parameters (matrices are represented by their rows)
$\mathbf{H} = \{\{1,0.25\},\{0, 2\}\},~
\mathbf{\Sigma}_{x} = \{\{0.5,0.25\},\{0,0.5\}\},~
\vec{\theta} = (1,-1)^{T} ~\mathrm{and}~
\vec{x}_{0} = (0,0)^{T}.
$
}\label{figPCMproc}
\end{figure}


\VM{\del{
Further, in order to allow for noisy measurements or a non--phylogenetic influence on the trait, we
assume that the observable measurement presents the sum of
the random variables evolving according to the stochastic process and a random error, $\vec{e}$.
We assume that the error vector $\vec{e}$ is uncorrelated with the stochastic process and
distributed according to a multivariate normal distribution with mean $\vec{0}$ and
variance--covariance matrix $\mathbf{\Sigma}_{e}$. It is especially important
to remember about this error term when one has multiple individuals measured inside a species.
Using it allows one to incorporate intra--species variation.}}

\VM{
Such stochastic processes are used as models of continuous trait evolution at the macro-evolutionary time scale, 
that is, when the time-units are in the order of hundreds to thousands of generations. 
Further in the text, we use the term ``(trait evolutionary) model'' to denote such kind of stochastic processes. 
We now turn to describing a family of models for which we will then provide an efficient way to calculate the 
likelihood of their parameters given the tree and the trait data observed at its tips. 
\VMcom{We consider the tree as fixed, so I think, it should be part of the observed data, not a model parameter.}}

\subsection{The 
$\mathcal{G}_{LInv}$ family of models} 
\del{We depart from the assumption that the trait--values at the tips of the tree come from a realization of a
\del{normal Markovian branching} 
branching continuous state--space stochastic process
(with some assumptions on the transition density). 
}
\VM{The following definition specifies all requirements needed for a  
trait evolutionary model to be integrated within the fast computational framework:
\begin{definition}[The $\mathcal{G}_{LInv}$ family]\label{defGLinv}
We say that a trait evolutionary model belongs to the $\mathcal{G}_{LInv}$ family if it satisfies the following
\begin{enumerate}
\item after branching the traits evolve independently in \VM{\del{the two}all} descending lineages,
\item the distribution of the trait vector at time $t$, $\vec{x}(t)$,  conditional on the trait vector at time 
$s < t$, $\vec{x}(s)$, is Gaussian
with the mean and variance satisfying
\begin{enumerate}[label=(2.\alph*)]
\item $\E{\vec{x}(t) \vert \vec{x}(s)} = \vec{\omega} + \mathbf{\Phi} \vec{x}(s)$ \\ (the expectation is a 
linear function of the ancestral value), \\
\item $\var{\vec{x}(t) \vert \vec{x}(s)} = \mathbf{V}$ \\ (variance is invariant with respect to the 
ancestral value),
\end{enumerate}
for some vector $\vec{\omega}$ and matrices $\mathbf{\Phi}$, $\mathbf{V}$ which may depend on $s$ and $t$ 
but do not depend on the trait trajectory $\vec{x}(\cdot)$.
\KBcom{It is unclear in the way it was written,
it cannot depend o $\vec{x}(s)$, but what about $\vec{x}(t)$}
\end{enumerate}
\end{definition}
}

\del{ representation for
normal trait generating processes.
We assume that all model parameters are piecewise constant
over the tree [I guess we never mention though that Thm 1,2 can be readily applied for piecewise constant parameters?]. 
As this is a normal process framework}
\TScom{what comes here seems to belong to Thm. 2.1.}
\KBcom{Yes, but I thought an introductory paragraph like this would be better instead of a very long theorem.
In any case this notation is also used throughout the paper, so it can be standalone like this, I think?
We wanted with Venelin to have a place to introduce some of this notation and maybe this a good one?}
\VMcom{Is introducing Theta and peace-wise constancy of the parameters needed at this point?}\del{Let us denote the set of all parameters of our considered model as $\Theta$. \TS{We assume that all model parameters are piecewise constant over the tree. }}
\TScom{I think what comes now until Remark is very confusing. shouldn't def. 1 come here? I am fine with introducing notation before the Thm. But we seem to state results of the theorem (Eqn 1) without proper context. I would define $V,\Phi, \omega$ prior to the thm, and then state in thm the cond. probability (Eqn 1) and say how $V,\Phi, \omega$ give rise to $A,b,C,d,E,f$. As I'm a bit confused about the content and you may disagree, I did not do this re-ordering yet.  }\VMcom{Did the reordering. Please note also, that the definition of GLinv was not full which is also the case for the same def. in the PNAS paper. It is important to include in the definition that all of omega, Phi and V do not depend on x(s). Also the theorem was not stated correctly, because it was omitting to mention that A,b,C,d,E,f are also independent with respect to xj. Does the maths in thm 1 make sense now?}
\del{ pdf of a trait vector at a node in 
$\mathbb{T}$ conditioned on the trait vector at its parent node 
  All of these are composite parameters, i.e. functions of elements
of $\Theta$ and $t_{i}$---the length of the branch leading to node $i$.
We also introduce here some notation that will be used throughout the paper and will
be also useful for the user of our package.
We may represent a non--degenerate (i.e. the covariance matrix is non--singular), appropriately conditioned on 
$\vec{x}_{j}$ and $t_{i}$,  normal density function as
Such a representation for the pdf is equivalent to saying that the trait
process is multivariate normal.
The ancestral state enters the conditional mean linearly and does not
affect the conditional variance.
For a pair of normal random variables the conditional expectation
of one on the other is linear in terms of the other and the conditional variance is independent
of the value of the conditioned on random variable. We summarize this in the following theorem.
}

\VM{
Later, in section \ref{secpcmOU}, we show that the  $\mathcal{G}_{LInv}$ family contains many \TS{well--known} 
\TS{contemporary} models
such as BM, multivariate OU (where all traits are OU or some are BMs), BM or OU with normally distributed jumps. 
Now we derive an important property of the $\mathcal{G}_{LInv}$ family playing a key role for the fast 
likelihood calculation:}

\VMcom{Showing the equivalence between the GLinv and the quadratic polynomial representation is not needed at this point, 
but is needed later to clarify the scope of the computational framework. 
Hence, I've split the theorem 1 into theorem 1 and theorem 3 (see next subsection). 
In this way, the next subsection can be written a bit more explicit (without length concerns) -- 
the questions about the symmetry and the negative definiteness of A should indeed be mentioned.
 Otherwise, it is not clear, why do we want A to be symmetric negative -- for example the integration in 
 thm 2 works also for asymmetric A.  Hence, remark 1 is indeed useful (i think). }
\KBcom{The way remark 1 is written later on is wrong. I have removed the part with the error. Since A = -2V
then it is obvious that it has to be symmetric neg-def. As we introduce the model through omega, Phi, V 
then we induce certain properties on AbCdEf. In the previous version we just took AbCdEf so we had to justify the
choice of constraints. Here, these are induced. Therefore, I do not think that the remark is required. I have left
it in a correct version.
}
\begin{theorem}\label{thmpcmNorm}
Let $\mathcal{M}$ be a trait model from the $\mathcal{G}_{LInv}$-family. Let $i$ be a tip or internal node and $j$ 
be its parent node in a tree $\mathbb{T}$, and 
let $\vec{x}_{i}\in \mathbb{R}^{k_{i}}$, $\vec{x}_{j}\in \mathbb{R}^{k_{j}}$ ($k_{i},k_{j}\in\mathbb{Z}^{+}$) 
be the trait--vectors at the nodes $i$ and $j$ under a realization of $\mathcal{M}$ on $\mathbb{T}$. 
Let $\vec{\omega}_{i}$, $\mathbf{\Phi}_{i}$ and $\mathbf{V}_{i}$ denote the terms $\vec{\omega}$, $\mathbf{\Phi}$ 
and $\mathbf{V}$ from definition \ref{defGLinv} specific for node $i$. 
\KBcom{Below is stating the obvious as it is Gaussian}
Then, 
\del{
$\vec{x}_{i}$ has non--zero support on the whole of $\mathbb{R}^{k_{i}}$, and}
the probability density function (pdf) of $\vec{x}_{i}$ 
conditioned on $\vec{x}_{j}$ can be expressed as the following exponential of a quadratic polynomial 
\be\label{pdfXiXj}
pdf(\vec{x}_{i}|\vec{x}_{j})=\exp\bigg[\vec{x}_{i}^T \mathbf{A}_{i} \vec{x}_{i} + \vec{x}_{i}^T \vec{b}_{i}
+ \vec{x}_{j}^T \mathbf{C}_{i} \vec{x}_{j} + \vec{x}_{j}^T \vec{d}_{i} + \vec{x}_{j}^T \mathbf{E}_{i} \vec{x}_{i} 
+ f_{i}\bigg],
\ee
\noindent where $\mathbf{A}_{i}$ is a symmetric negative-definite matrix, and all of the terms 
$\mathbf{A}_{i}$, $\vec{b}_{i}$, $\mathbf{C}_{i}$, $\vec{d}_{i}$, $\mathbf{E}_{i}$, $f_{i}$ are 
constants with respect to $\vec{x}_{j}$, specified by the equations:
\be
\begin{array}{rcl}
\mathbf{A}_{i} & = & -\frac{1}{2}\mathbf{V}_{i}^{-1} \in \mathbb{R}^{k_{i}\times k_{i}} \\
\vec{b}_{i} & = & \mathbf{V}_{i}^{-1}\vec{\omega}_{i} \in \mathbb{R}^{k_{i}} \\
\mathbf{C}_{i} & = &-\frac{1}{2}\mathbf{\Phi}_{i}^{T}\mathbf{V}_{i}^{-1}\mathbf{\Phi}_{i} \in \mathbb{R}^{k_{j}\times k_{j}}\\
\vec{d}_{i} & = & -\mathbf{\Phi}_{i}^{T}\mathbf{V}_{i}^{-1}\vec{\omega}_{i} \in \mathbb{R}^{k_{j}}\\
\mathbf{E}_{i} & = & \mathbf{\Phi}_{i}^{T}\mathbf{V}_{i}^{-1} \in \mathbb{R}^{k_{j}\times k_{i}} \\
f_{i} & = & -\frac{1}{2}\vec{\omega}_{i}^{T}\mathbf{V}_{i}^{-1}\vec{\omega}_{i} -\frac{k_{i}}{2}\log\left(2\pi\right) - \frac{1}{2}\log \vert \mathbf{V}_{i}\vert \in \mathbb{R}.
\end{array}\label{eqAbCdEfVoP}
\ee
\begin{proof}
Substituting $\vec{\omega}_{i} + \mathbf{\Phi}_{i} \vec{x}_{j}$ and $\mathbf{V}_{i}$ for the mean and 
variance in the formula for the pdf of a multivariate Gaussian distribution, we obtain:
\be\label{eqmvnorm}
pdf(\vec{x}_{i}|\vec{x}_{j}) = \exp\bigg[-\frac{1}{2}\left(\vec{x}_{i} - \left(\vec{\omega}_{i} + \mathbf{\Phi}_{i}\vec{x}_{j} \right)\right)^{T}\mathbf{V}_{i}^{-1}\left(\vec{x}_{i} - \left(\vec{\omega}_{i} + \mathbf{\Phi}_{i}\vec{x}_{j} \right)\right)
-\frac{k_{i}}{2}\log\left(2\pi\right) - \frac{1}{2}\log \vert \mathbf{V}_{i}\vert \bigg]
\ee
\noindent By expanding and reordering the terms in parentheses, Eq. \eqref{eqmvnorm} can be rewritten as
\be\label{eqmvnorm2ndForm}
\begin{array}{llll}
pdf(\vec{x}_{i}|\vec{x}_{j}) & = & \exp\bigg[ & \vec{x}_{i}^{T}\big(-\frac{1}{2}\mathbf{V}_{i}^{-1}\big)\vec{x}_{i}+\\
 & & & \vec{x}_{i}^{T}\big(\mathbf{V}_{i}^{-1}\vec{\omega}_{i}\big)+\\
 & & & \vec{x}_{j}^{T}\big(-\frac{1}{2}\mathbf{\Phi}_{i}^{T}\mathbf{V}_{i}^{-1}\mathbf{\Phi}_{i}\big)\vec{x}_{j}+\\
 & & & \vec{x}_{j}^{T}\big(-\mathbf{\Phi}_{i}^{T}\mathbf{V}_{i}^{-1}\vec{\omega}_{i}\big)+\\
 & & & \vec{x}_{j}^{T}\big(\mathbf{\Phi}_{i}^{T}\mathbf{V}_{i}^{-1}\big)\vec{x}_{i}+\\
 & & & \big(-\frac{1}{2}\vec{\omega}_{i}^{T}\mathbf{V}_{i}^{-1}\vec{\omega}_{i}
-\frac{k_{i}}{2}\log\left(2\pi\right) - \frac{1}{2}\log \vert \mathbf{V}_{i}\vert\big)
\bigg].
\end{array}
\ee
We can see the correspondence with the quadratic forms 
$\vec{x}_{i}^T(\ldots)\vec{x}_{i}$, $\vec{x}_{j}^{T}(\ldots)\vec{x}_{j}$ 
and the other terms in Eq. \eqref{pdfXiXj}. Equation \eqref{eqAbCdEfVoP} 
follows immediately. Furthermore, $\mathbf{A}_{i}$ is a symmetric negative--definite matrix, 
because $\mathbf{V}_{i}$ is a symmetric positive--definite matrix as it is a
\del{by the definition of a} 
variance--covariance matrix. \del{of a multivariate Gaussian distribution.}
Finally, all of the terms $\mathbf{A}_{i}$, $\vec{b}_{i}$, $\mathbf{C}_{i}$, $\vec{d}_{i}$, $\mathbf{E}_{i}$, $f_{i}$ 
are constant with respect to $\vec{x}_{j}$, because they are functions of $\vec{\omega}_{i}$, $\mathbf{\Phi}_{i}$ 
and $\mathbf{V}_{i}$ which are constants with respect to $\vec{x}_{j}$ by Def. \ref{defGLinv}.
\end{proof}
\end{theorem}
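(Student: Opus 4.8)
The plan is to treat the statement as a direct computation, since Definition~\ref{defGLinv} already supplies everything needed. By part~(2) of the definition, $\vec{x}_{i}\mid\vec{x}_{j}$ is multivariate Gaussian with mean $\vec{\omega}_{i}+\mathbf{\Phi}_{i}\vec{x}_{j}$ and covariance $\mathbf{V}_{i}$, so I would substitute these two quantities into the standard multivariate normal density and rearrange its exponent into the monomial form of Eq.~\eqref{pdfXiXj}. Concretely, writing $\vec{m}_{i}:=\vec{\omega}_{i}+\mathbf{\Phi}_{i}\vec{x}_{j}$, the log-density is $-\frac{1}{2}(\vec{x}_{i}-\vec{m}_{i})^{T}\mathbf{V}_{i}^{-1}(\vec{x}_{i}-\vec{m}_{i})$ plus the normalising constant $-\frac{k_{i}}{2}\log(2\pi)-\frac{1}{2}\log|\mathbf{V}_{i}|$, and the whole task reduces to expanding this exponent and identifying coefficients.

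The core step is the expansion of the quadratic form. Using the symmetry of $\mathbf{V}_{i}^{-1}$, I would write $(\vec{x}_{i}-\vec{m}_{i})^{T}\mathbf{V}_{i}^{-1}(\vec{x}_{i}-\vec{m}_{i})=\vec{x}_{i}^{T}\mathbf{V}_{i}^{-1}\vec{x}_{i}-2\vec{x}_{i}^{T}\mathbf{V}_{i}^{-1}\vec{m}_{i}+\vec{m}_{i}^{T}\mathbf{V}_{i}^{-1}\vec{m}_{i}$, substitute $\vec{m}_{i}=\vec{\omega}_{i}+\mathbf{\Phi}_{i}\vec{x}_{j}$, and collect terms by the monomial in $(\vec{x}_{i},\vec{x}_{j})$ they feed. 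The pure $\vec{x}_{i}$-quadratic gives $\mathbf{A}_{i}=-\frac{1}{2}\mathbf{V}_{i}^{-1}$; the $\vec{x}_{i}$-linear part gives $\vec{b}_{i}=\mathbf{V}_{i}^{-1}\vec{\omega}_{i}$; the $\vec{x}_{j}$-quadratic inside $\vec{m}_{i}^{T}\mathbf{V}_{i}^{-1}\vec{m}_{i}$ gives $\mathbf{C}_{i}=-\frac{1}{2}\mathbf{\Phi}_{i}^{T}\mathbf{V}_{i}^{-1}\mathbf{\Phi}_{i}$; the $\vec{x}_{j}$-linear part gives $\vec{d}_{i}=-\mathbf{\Phi}_{i}^{T}\mathbf{V}_{i}^{-1}\vec{\omega}_{i}$; the single bilinear contribution gives $\mathbf{E}_{i}=\mathbf{\Phi}_{i}^{T}\mathbf{V}_{i}^{-1}$; and the leftover scalar $-\frac{1}{2}\vec{\omega}_{i}^{T}\mathbf{V}_{i}^{-1}\vec{\omega}_{i}$ combined with the normalising constant gives $f_{i}$. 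Equation~\eqref{eqAbCdEfVoP} then follows by inspection.

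Finally I would dispatch the two structural claims. Because $\mathbf{V}_{i}$ is a covariance matrix it is symmetric positive-definite, hence so is $\mathbf{V}_{i}^{-1}$, and therefore $\mathbf{A}_{i}=-\frac{1}{2}\mathbf{V}_{i}^{-1}$ is symmetric negative-definite. Constancy of $\mathbf{A}_{i},\vec{b}_{i},\mathbf{C}_{i},\vec{d}_{i},\mathbf{E}_{i},f_{i}$ with respect to $\vec{x}_{j}$ is immediate, since each is assembled only from $\vec{\omega}_{i}$, $\mathbf{\Phi}_{i}$ and $\mathbf{V}_{i}$, which by Definition~\ref{defGLinv} do not depend on the trajectory and in particular not on $\vec{x}_{j}$.

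The computation carries no real difficulty; the only points needing care — and where a transpose or sign error could slip in — are the two terms built from $\mathbf{\Phi}_{i}^{T}$. The bilinear contribution $-2\vec{x}_{i}^{T}\mathbf{V}_{i}^{-1}\mathbf{\Phi}_{i}\vec{x}_{j}$ must be rewritten via $\vec{x}_{i}^{T}\mathbf{V}_{i}^{-1}\mathbf{\Phi}_{i}\vec{x}_{j}=\vec{x}_{j}^{T}\mathbf{\Phi}_{i}^{T}\mathbf{V}_{i}^{-1}\vec{x}_{i}$ so that it matches the asymmetric form $\vec{x}_{j}^{T}\mathbf{E}_{i}\vec{x}_{i}$ of Eq.~\eqref{pdfXiXj} and fixes the orientation $\mathbf{E}_{i}=\mathbf{\Phi}_{i}^{T}\mathbf{V}_{i}^{-1}$; likewise the scalar $\vec{\omega}_{i}^{T}\mathbf{V}_{i}^{-1}\mathbf{\Phi}_{i}\vec{x}_{j}$ must be transposed before one can read off $\vec{d}_{i}=-\mathbf{\Phi}_{i}^{T}\mathbf{V}_{i}^{-1}\vec{\omega}_{i}$. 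Keeping these transpose conventions consistent is the main bookkeeping obstacle, and the only one.
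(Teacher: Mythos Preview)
Your proposal is correct and follows essentially the same approach as the paper: substitute the conditional mean $\vec{\omega}_{i}+\mathbf{\Phi}_{i}\vec{x}_{j}$ and covariance $\mathbf{V}_{i}$ into the multivariate normal density, expand the quadratic form in the exponent, and read off the coefficients $\mathbf{A}_{i},\vec{b}_{i},\mathbf{C}_{i},\vec{d}_{i},\mathbf{E}_{i},f_{i}$ term by term, then observe symmetry/negative-definiteness of $\mathbf{A}_{i}$ and constancy in $\vec{x}_{j}$ from Definition~\ref{defGLinv}. Your added remark on the transpose bookkeeping for $\mathbf{E}_{i}$ and $\vec{d}_{i}$ is a useful clarification but does not depart from the paper's argument.
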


\subsection{Calculating the likelihood of 
$\mathcal{G}_{LInv}$-models}\label{subsecmvFast}
Let $\mathcal{M}$ be a trait evolutionary model realized on a tree $\mathbb{T}$ and $\Theta$ denotes the 
parameters of $\mathcal{M}$. The likelihood of $\mathcal{M}$ for given trait data $\mathbf{X}$ associated 
with the tips of $\mathbb{T}$ is defined as the function $\ell(\Theta)=pdf(\mathbf{X}\vert \mathbb{T},\Theta)$. 
The representation of Eq. \eqref{pdfXiXj} allows for linear (in terms of \VM{the number of tips, $N$}) calculation
of the likelihood of \VM{any trait model in the 
$\mathcal{G}_{LInv}$-family, given a phylogeny and measured 
data at its tips.} \TScom{linear follows from next theorem, right? we should say that.} 
This follows from the next theorem.

\begin{theorem}\label{thmMvFast}
Let $\mathcal{M}$ be a trait evolutionary model from the $\mathcal{G}_{LInv}$-family and $\mathbb{T}$ 
be a phylogenetic tree. Let $\Theta$ be the parameters of $\mathcal{M}$. For the root (0) or any internal 
node $j$ in $\mathbb{T}$,
there exists a $k_{j}\times k_{j}$ matrix $\mathbf{L}_{j}$, a $k_{j}$--vector $\vec{m}_{j}$
and a scalar $r_{j}$, such that the likelihood of $\mathcal{M}$ for the data 
\TScom{for [likelihood is function OF parameters, not OF data]} \del{of} $\mathbf{X}_{j}$, 
conditioned on $\vec{x}_{j} \in \mathbb{R}^{k_{j}}$ and \VM{
\del{
$\mathbb{T}_{j}$ (the subtree with node $j$ as its root)}} 
\VM{
$\mathbb{T}$} is expressed as: \VMcom{Previously, 
we wrote $\mathbb{T}_{j}$ 
but it should be the whole tree, because omega, Phi and V can depend on the whole tree, 
i.e. the branch-lengths before the node j. I've also updated all equations below.} 

\be\label{pdfXXjXj}
pdf(\mathbf{X}_{j}|\vec{x}_{j},\mathbb{T},\Theta)=\exp\left(\vec{x}_{j}^T\mathbf{L}_{j}\vec{x}_{j}+
\vec{x}_{j}^T\vec{m}_{j}+r_{j}\right).
\ee
The parameters $\mathbf{L}_{j}$, $\vec{m}_{j}$, $r_{j}$ 
are functions of $\Theta$, the observed data $\mathbf{X}_{j}$, and 
\VM{the tree 
$\mathbb{T}$}\VMcom{I would remove this clause, because the topology and the branch 
length are not terms in the mentioned equations:\del{, both in terms of topology and branch lengths}}, 
\TS{namely, equations \ref{pdfLmn_alltips}, \ref{pdfLmn_tips}, and \ref{pdfLmn_non-tips}.}

\end{theorem}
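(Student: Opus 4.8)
The plan is to prove the statement by induction on the structure of the tree, traversing it from the tips towards the root (a post-order traversal). The inductive claim is exactly Eq.~\eqref{pdfXXjXj}: that for every node $j$ the conditional density $pdf(\mathbf{X}_{j}\vert \vec{x}_{j},\mathbb{T},\Theta)$ is the exponential of a quadratic polynomial in $\vec{x}_{j}$. The induction is on the size of the subtree $\mathbb{T}_{j}$, and the two structural ingredients I will lean on are requirement~(1) of Definition~\ref{defGLinv} (conditional independence of the descending lineages given $\vec{x}_{j}$) and Theorem~\ref{thmpcmNorm} (the single-branch transition density is an exponential of a quadratic polynomial in the parent and child trait vectors).

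For the base case I would take a node $j$ all of whose direct descendants are tips. For each tip $c\in Desc(j)$ the measured value $\vec{x}_{c}$ is a fixed constant, so by Theorem~\ref{thmpcmNorm} the density $pdf(\vec{x}_{c}\vert\vec{x}_{j})$ is already an exponential of a quadratic polynomial in $\vec{x}_{j}$ alone: its quadratic coefficient is $\mathbf{C}_{c}$, its linear coefficient is $\vec{d}_{c}+\mathbf{E}_{c}\vec{x}_{c}$, and the $\vec{x}_{c}$-dependent remainder collapses into a scalar. By the conditional independence of Definition~\ref{defGLinv}(1), $pdf(\mathbf{X}_{j}\vert\vec{x}_{j})=\prod_{c\in Desc(j)}pdf(\vec{x}_{c}\vert\vec{x}_{j})$, and since a product of exponentials of quadratics is the exponential of the sum of those quadratics, reading off the aggregate coefficients yields $\mathbf{L}_{j}$, $\vec{m}_{j}$ and $r_{j}$.

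For the inductive step I would treat a descendant $c\in Desc(j)$ that is itself an internal node. By the inductive hypothesis $pdf(\mathbf{X}_{c}\vert\vec{x}_{c})=\exp(\vec{x}_{c}^{T}\mathbf{L}_{c}\vec{x}_{c}+\vec{x}_{c}^{T}\vec{m}_{c}+r_{c})$, and since the node value $\vec{x}_{c}$ is unobserved it must be integrated out against the branch transition density from Theorem~\ref{thmpcmNorm}:
\be\label{eqIntegrateChild}
pdf(\mathbf{X}_{c}\vert\vec{x}_{j})=\int_{\mathbb{R}^{k_{c}}}pdf(\mathbf{X}_{c}\vert\vec{x}_{c})\,pdf(\vec{x}_{c}\vert\vec{x}_{j})\,\ud\vec{x}_{c}.
\ee
The exponent of the integrand is a quadratic polynomial in $\vec{x}_{c}$ whose quadratic coefficient is $\mathbf{A}_{c}+\mathbf{L}_{c}$ and whose linear coefficient is affine in $\vec{x}_{j}$ through the $\vec{x}_{j}^{T}\mathbf{E}_{c}\vec{x}_{c}$ cross term of Theorem~\ref{thmpcmNorm}. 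Completing the square and applying the standard multivariate Gaussian integral turns Eq.~\eqref{eqIntegrateChild} into an exponential of a quadratic polynomial in $\vec{x}_{j}$. Multiplying these per-child contributions together, again by Definition~\ref{defGLinv}(1), and collecting the quadratic, linear and constant parts produces $\mathbf{L}_{j}$, $\vec{m}_{j}$ and $r_{j}$ for node $j$, completing the induction; the resulting closed forms are exactly the per-tip and per-internal-child recursions stated in the theorem.

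The step that needs care is the Gaussian integration in Eq.~\eqref{eqIntegrateChild}, which is well defined only if the quadratic coefficient $\mathbf{A}_{c}+\mathbf{L}_{c}$ is negative definite. Rather than verifying this by tracking definiteness through every recursion level, I would argue probabilistically: the integrand is a product of genuine (nonnegative) conditional densities, so its integral over $\vec{x}_{c}$ is by construction the marginal density $pdf(\mathbf{X}_{c}\vert\vec{x}_{j})$, which is finite for each $\vec{x}_{j}$. Hence the integral converges, which forces $\mathbf{A}_{c}+\mathbf{L}_{c}$ to be negative definite and makes the completing-the-square formula valid. The only remaining work is then the routine, transpose-sensitive bookkeeping that packages the accumulated coefficients into the explicit recursion formulas; the mathematical substance is entirely in the induction structure and the convergence of Eq.~\eqref{eqIntegrateChild}.
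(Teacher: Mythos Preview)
Your proposal is correct and follows essentially the same route as the paper: an induction over the tree via post-order traversal, with the base case handling nodes whose children are all tips and the inductive step integrating out each internal child by a Gaussian integral after combining Theorem~\ref{thmpcmNorm} with the inductive hypothesis. The one place you go beyond the paper is your probabilistic argument that $\mathbf{A}_{c}+\mathbf{L}_{c}$ must be negative definite because the integral in Eq.~\eqref{eqIntegrateChild} is a bona fide marginal density and hence finite; the paper simply applies the Gaussian integral formula at the corresponding step without making this justification explicit, so your version actually closes a small gap.
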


\begin{proof}
\VM{Following condition 1 of definition \ref{defGLinv} we can factorize} the conditional likelihood at any
internal or root node $j$. Splitting $Desc(j)$, i.e. the set of nodes descending from node $j$,
into tips and non--tips,
denoted as $Desc(j)\cap \{1,...,N\}$ and $Desc(j)\setminus \{1,...,N\}$, we can write:

\be\label{pdfXXjXj_fact}
\begin{array}{rcl}
pdf(\mathbf{X}_{j}|\vec{x}_{j},\mathbb{T},\Theta) & = & \left(\prod\limits_{i \in Desc(j)\cap\{1,...,N\}}
pdf(\vec{x}_{i}|\vec{x}_{j},\mathbb{T},\Theta) \right)
\times \\
& & \left(\prod\limits_{i \in Desc(j)\setminus\{1,...,N\}}\bigintss_{\mathbb{R}^{k}_{j}}pdf(\vec{x}_{i}|\vec{x}_{j},\mathbb{T},\Theta)
\times pdf(\mathbf{X}_{i}|\vec{x}_{i},\mathbb{T},\Theta)\ud\vec{x}_{i}\right).
\end{array}
\ee

We first prove the Theorem for nodes where all descendants are tips.
If all descendants of $j$ are tips (e.g. nodes 6 and 7 on Fig. \ref{figTreeNotation}),
then, according to Eq. \eqref{pdfXiXj}

$$
\begin{array}{rcl}
pdf(\mathbf{X}_{j}|\vec{x}_{j},\mathbb{T},\Theta) & = & \prod\limits_{i \in Desc(j)}pdf(\vec{x}_{i}|\vec{x}_{j},\mathbb{T},\Theta)\\
& = & \exp \left( \sum\limits_{i \in Desc(j)} \vec{x}_{i}^T \mathbf{A}_{i} \vec{x}_{i} + \vec{x}_{i}^T \vec{b}_{i} +
\vec{x}_{j}^T \mathbf{C}_{i} \vec{x}_{j} + \vec{x}_{j}^T \vec{d}_{i} + \vec{x}_{j}^T\mathbf{E}_{i} \vec{x}_{i} + f_{i} \right),
\end{array}
$$
resulting in

\be\label{pdfXXjXj_alltips}
pdf(\mathbf{X}_{j}|\vec{x}_{j},\mathbb{T},\Theta) =  \exp\left(\vec{x}_{j}^T (\sum\limits_{i \in Desc(j)} \mathbf{C}_{i}) \vec{x}_{j} +
\vec{x}_{j}^{T}(\sum\limits_{i\in Desc(j)}\vec{d}_{i} + \mathbf{E}_{i} \vec{x}_{i}) +
\sum\limits_{i \in Desc(j)}\vec{x}_{i}^T \mathbf{A}_{i} \vec{x}_{i} + \vec{x}_{i}^T \vec{b}_{i} + f_{i}\right)
\ee
Then, to obtain the representation from Eq. \eqref{pdfXXjXj}, we denote:

\be\label{pdfLmn_alltips}
\begin{array}{rcl}
\mathbf{L}_{j}& = &\sum\limits_{i \in Desc(j)} \mathbf{C}_{i}\\
\vec{m}_{j}& = &\sum\limits_{i\in Desc(j)}\vec{d}_{i} + \mathbf{E}_{i} \vec{x}_{i}\\
r_{j} & = &\sum\limits_{i \in Desc(j)}\vec{x}_{i}^T \mathbf{A}_{i} \vec{x}_{i} + \vec{x}_{i}^T \vec{b}_{i} + f_{i}
\end{array}
\ee
If not all of $Desc(j)$ are tips, then, for the descendants which are tips, we define:

\be\label{pdfLmn_tips}
\begin{array}{rcl}
\mathbf{L}_{j}^{tips}& = &\sum\limits_{i \in Desc(j)\cap \{1,...,N\}} \mathbf{C}_{i}\\
\vec{m}_{j}^{tips}& = &\sum\limits_{i\in Desc(j)\cap \{1,...,N\}}\vec{d}_{i} + \mathbf{E}_{i} \vec{x}_{i}\\
r_{j}^{tips} & = &\sum\limits_{i \in Desc(j)\cap \{1,...,N\}}\vec{x}_{i}^T \mathbf{A}_{i} \vec{x}_{i}
+ \vec{x}_{i}^T \vec{b}_{i} + f_{i}
\end{array}
\ee

We perform mathematical induction to prove the Theorem for all nodes. We need to show that
\VM{Eq. \eqref{pdfXXjXj} holds for each non-tip descendant of $j$, that is,} 
for each $i\in Desc(j)\setminus \{1,...,N\}$
there exists a $k_{i}\times k_{i}$ matrix $\mathbf{L}_{i}$, a $k_{i}$--vector $\vec{m}_{i}$ and a
scalar $r_{i}$ such that
$pdf(\mathbf{X}_{i}\vert \vec{x}_{i},\mathbb{T},\Theta)=\exp(\vec{x}_{i}^T\mathbf{L}_{i}\vec{x}_{i}+\vec{x}_{i}^T\vec{m}_{i}+r_{i})$.
We proved the induction base case, namely, we proved above \VM{that} the 
Eq. \eqref{pdfXXjXj} holds for all nodes which have only tip--descendants. 
Then, the induction hypothesis is that for an internal node $j$, the statement of the 
theorem has been proven for all \VM{
\del{subtrees 
$\mathbb{T}_{i}$, such that}} $i \in Desc(j)$. 
Now in the inductive step using Eq. \eqref{pdfXiXj} and the induction hypothesis, 
we can write the integral in Eq. \eqref{pdfXXjXj_fact} as

\begin{multline*}
\int_{\mathbb{R}^{k_{i}}}pdf(\vec{x}_{i}|\vec{x}_{j},\mathbb{T},\Theta) \times pdf(\mathbf{X}_{i}|\vec{x}_{i},\mathbb{T},\Theta)\ud\vec{x}_{i}
\\ = \int_{\mathbb{R}^{k_{i}}}\exp \left(\vec{x}_{i}^T \mathbf{A}_{i} \vec{x}_{i} + \vec{x}_{i}^T \vec{b}_{i}
+ \vec{x}_{j}^T \mathbf{C}_{i} \vec{x}_{j} + \vec{x}_{j}^T \vec{d}_{i} + \vec{x}_{j}^T \mathbf{E}_{i} \vec{x}_{i}
+ f_{i} + \vec{x}_{i}^T\mathbf{L}_{i}\vec{x}_{i}+\vec{x}_{i}^T\vec{m}_{i}+r_{i}\right)\ud\vec{x}_{i}\\
= \exp \left(\vec{x}_{j}^T \mathbf{C}_{i} \vec{x}_{j} + \vec{x}_{j}^T \vec{d}_{i} + f_{i} + r_{i}\right)
\times
\boxed{\int_{\mathbb{R}^{k_{i}}}\exp \left(\vec{x}_{i}^{T} (\mathbf{A}_{i}+\mathbf{L}_{i}) \vec{x}_{i} + \vec{x}_{i}^T (\vec{b}_{i} + \vec{m}_{i}+
\mathbf{E}_{i}^{T}\vec{x}_{j}) \right)\ud\vec{x}_{i}} \\
\stackrel{\bigstar}{=} \exp \left(\vec{x}_{j}^T \mathbf{C}_{i} \vec{x}_{j} + \vec{x}_{j}^T \vec{d}_{i} + f_{i} + r_{i}\right)
\left(\sqrt{2\pi} \right)^{k_{i}}\boxed{ \left(\sqrt{\vert (-2)\left(\mathbf{A}_{i} + \mathbf{L}_{i} \right) \vert} \right)^{-1}}
\\ \times
\exp\left(-(1/4)
\left(\vec{b}_{i}+\vec{m}_{i}+\mathbf{E}_{i}^{T}\vec{x}_{j}\right)^{T}\left(\mathbf{A}_{i} + \mathbf{L}_{i}\right)^{-1}
\left(\vec{b}_{i}+\vec{m}_{i}+\mathbf{E}_{i}^{T}\vec{x}_{j}\right)
\right)
\end{multline*}

\begin{multline*}
= \exp \left(\vec{x}_{j}^T \mathbf{C}_{i} \vec{x}_{j} + \vec{x}_{j}^T \vec{d}_{i} + f_{i} + r_{i}\right)
\left(\sqrt{2\pi} \right)^{k_{i}} \left(\sqrt{\vert (-2)\left(\mathbf{A}_{i} + \mathbf{L}_{i} \right) \vert} \right)^{-1}
\\ \times
\exp\left(
-(1/4)\left(\vec{b}_{i}+\vec{m}_{i}\right)^{T}\left(\mathbf{A}_{i} + \mathbf{L}_{i}\right)^{-1}
\left(\vec{b}_{i}+\vec{m}_{i}\right)
-(1/2) \vec{x}_{j}^{T}\mathbf{E}_{i}\left(\mathbf{A}_{i} + \mathbf{L}_{i}\right)^{-1}
\left(\vec{b}_{i}+\vec{m}_{i}\right)
\right. \\ \left.
-(1/4)
\vec{x}_{j}^{T}\mathbf{E}_{i}\left(\mathbf{A}_{i} + \mathbf{L}_{i}\right)^{-1}\mathbf{E}_{i}^{T}\vec{x}_{j}
\right)
\\
=\exp \left(\vec{x}_{j}^T \left(\mathbf{C}_{i}-(1/4) \mathbf{E}_{i}\left(\mathbf{A}_{i} + \mathbf{L}_{i}\right)^{-1}\mathbf{E}_{i}^{T} \right) \vec{x}_{j}
+ \vec{x}_{j}^T \left(\vec{d}_{i} -(1/2) \mathbf{E}_{i}\left(\mathbf{A}_{i} + \mathbf{L}_{i}\right)^{-1}
\left(\vec{b}_{i}+\vec{m}_{i}\right)\right)
\right. \\ \left.
+ f_{i} + r_{i} +
(k_{i}/2)\log(2\pi) -(1/2) \log(\vert (-2)\left(\mathbf{A}_{i} + \mathbf{L}_{i} \right) \vert)
-(1/4)\left(\vec{b}_{i}+\vec{m}_{i}\right)^{T}\left(\mathbf{A}_{i} + \mathbf{L}_{i}\right)^{-1}\left(\vec{b}_{i}+\vec{m}_{i}\right)
\right)
\end{multline*}
We can then see that for a non--tip node we can define

\be\label{pdfLmn_non-tips}
\begin{array}{rcl}
\mathbf{L}_{j}^{non-tips}& = &\sum\limits_{i \in Desc(j)\setminus \{1,...,N\}}
\left(
\mathbf{C}_{i} - (1/4) \mathbf{E}_{i}\left(\mathbf{A}_{i} + \mathbf{L}_{i}\right)^{-1}\mathbf{E}_{i}^{T} \right) \\
\vec{m}_{j}^{non-tips}& = &\sum\limits_{i\in Desc(j)\setminus \{1,...,N\}}
\left( \vec{d}_{i}  -(1/2) \mathbf{E}_{i}\left(\mathbf{A}_{i} + \mathbf{L}_{i}\right)^{-1}
\left(\vec{b}_{i}+\vec{m}_{i}\right) \right)
\\
r_{j}^{non-tips} & = &\sum\limits_{i \in Desc(j)\setminus \{1,...,N\}}\left(  f_{i} + r_{i}
+ (k_{i}/2)\log(2\pi) -(1/2) \log(\vert (-2)\left(\mathbf{A}_{i} + \mathbf{L}_{i} \right) \vert)
\right. \\ && \left.
-(1/4)\left(\vec{b}_{i}+\vec{m}_{i}\right)^{T}\left(\mathbf{A}_{i} + \mathbf{L}_{i}\right)^{-1}
\left(\vec{b}_{i}+\vec{m}_{i}\right) \right).
\end{array}
\ee
The representation of $\mathbf{L}_{j}^{non-tips}$, $\vec{m}_{j}^{non-tips}$ and $r_{j}^{non-tips}$
in  Eq. \eqref{pdfLmn_non-tips} immediately entails the existence of the
$\mathbf{L}_{j}$, $\vec{m}_{j}$ and $r_{j}$ elements in Eq. \eqref{pdfXXjXj} for internal or root
nodes $j$, hence we obtain the claimed polynomial form in the inductive step and in consequence the theorem.
\end{proof}

The inductive proof of Thm. \ref{thmMvFast} defines a pruning--wise procedure for calculating 
$\mathbf{L}_{0}$, $\vec{m}_{0}$ and $r_{0}$ (we remind that $0$ stands for the root of the tree). 
In order to calculate the likelihood of the tree conditioned on $\vec{x}_{0}$, 
we use Thm \ref{thmMvFast} with $j$ being the root node.
In order to be able to calculate the \TS{full} likelihood, it
now only remains to specify how to deal with the \TS{unknown} trait value at the root of the tree, \del{i.e.} $\vec{x}_{0}$,
\TS{i.e.} the ancestral state. This is an implementation detail up to the user.
Our implementation of the various models provided (sections \ref{secSoftware} and \ref{secpcmOU}) 
with the \pkg{PCMbase} package  allow
for maximizing the polynomial with respect to $\vec{x}_{0}$ or for treating it as a free parameter
(like the elements of the parameter set $\Theta$) that the user provides.

\TScom{why not put this definition prior to Thm 1, and then proof thm 1,2? It is confusing here.}\VMcom{Done.}

\subsection{Scope of the framework}\label{subsecFrameworkScope}
We now investigate if there are other trait evolutionary models, beyond the $\mathcal{G}_{LInv}$--family, 
for which the likelihood can be calculated using the same recursive formulae, 
Eqs. \eqref{pdfLmn_alltips}, \eqref{pdfLmn_tips}, and \eqref{pdfLmn_non-tips}
First, since we calculate the likelihood in a recursive pruning fashion, we assume that evolution 
is independent across branches, meaning condition $1$ is a necessary condition.
In Theorem \ref{thmpcmNormInverse}, 
we prove that the condition $2$ in Def. \ref{defGLinv} is also a necessary condition.
In other words, we show that if the likelihood can be calculated via  recursion based on
Eqs. \eqref{eqAbCdEfVoP}, \eqref{pdfXXjXj_alltips}, then the model is in the  $\mathcal{G}_{LInv}$--family.

\begin{theorem}\label{thmpcmNormInverse}
Let $\mathcal{M}$ be a trait model satisfying condition $1$ of Def. \ref{defGLinv} and realized on a 
tree $\mathbb{T}$. If for every parent--child pair of nodes $<j,i>$ in $\mathbb{T}$,
the trait--vector $\vec{x}_{i}\in\mathbb{R}^{k_{i}}$ ($k_{i}\in\mathbb{Z}^{+}$) 
has non--zero support on the whole of $\mathbb{R}^{k_{i}}$ and there exist a symmetric 
negative--definite matrix $\mathbf{A}_{i}\in \mathbb{R}^{k_{i} \times k_{i}}$ and 
components 
$\vec{b}_{i}\in \mathbb{R}^{k_{i}}$, $\mathbf{C}_{i}\in \mathbb{R}^{k_{j}\times k_{j}}$, 
$\vec{d}_{i}\in \mathbb{R}^{k_{j}}$, $\mathbf{E}_{i}\in \mathbb{R}^{k_{j}\times k_{i}}$, $f_{i} \in \mathbb{R}$, 
such that, for any vector of values at the parent node, $\vec{x}_{j}\in\mathbb{R}^{k_{j}}$ ($k_{j}\in\mathbb{Z}^{+}$), 
the pdf of $\vec{x}_{i}$ conditional on $\vec{x}_{j}$ can be expressed by Eq. \eqref{pdfXiXj}, then $\mathcal{M}$ 
belongs to the ${G}_{LInv}$--family and the terms $\vec{\omega}_{i}$, $\mathbf{\Phi}_{i}$ and $\mathbf{V}_{i}$ 
denoting the terms $\vec{\omega}$, $\mathbf{\Phi}$ and $\mathbf{V}$ from Def. \ref{defGLinv} specific 
for node $i$ satisfy Eq. \eqref{eqAbCdEfVoP}. 
\begin{proof}
\VMcom{I've written the proof more explicit and fixed another typo the definition of fi in terms of A and b.} 
We rearrange the terms on the right--hand side of Eq. \eqref{pdfXiXj} as follows 
\be
\begin{array}{l}\label{pdfXiXjRearranged}
pdf(\vec{x}_{i}|\vec{x}_{j}) =  \exp\bigg[\vec{x}_{i}^T \mathbf{A}_{i} \vec{x}_{i} -2 \vec{x}_{i}^T \mathbf{A}_{i}\left((-\frac{1}{2}\mathbf{A}_{i}^{-1})\left(\vec{b}_{i}+  \mathbf{E}_{i}^{T}\vec{x}_{j}\right)\right)
+ \left(\vec{x}_{j}^T \mathbf{C}_{i} \vec{x}_{j} + \vec{x}_{j}^T \vec{d}_{i}  + f_{i}\right)\bigg]
\\ = 
\exp\bigg[\left(\vec{x}_{i} +\frac{1}{2}\mathbf{A}_{i}^{-1}\left(\vec{b}_{i}+  \mathbf{E}_{i}^{T}\vec{x}_{j}\right) \right)^{T} \mathbf{A}_{i} \left(\vec{x}_{i} +\frac{1}{2}\mathbf{A}_{i}^{-1}\left(\vec{b}_{i}+  \mathbf{E}_{i}^{T}\vec{x}_{j}\right) \right)
- \frac{1}{4}\left(\vec{b}_{i}+  \mathbf{E}_{i}^{T}\vec{x}_{j}\right)^{T} \mathbf{A}_{i}^{-1}\left(\vec{b}_{i}+  \mathbf{E}_{i}^{T}\vec{x}_{j}\right)\\
+ \left(\vec{x}_{j}^T \mathbf{C}_{i} \vec{x}_{j} + \vec{x}_{j}^T \vec{d}_{i}  + f_{i}\right)\bigg].
\end{array}
\ee
As the above is by definition a density on $\mathbb{R}^{k_{i}}$, integrating over $\vec{x}_{i}$ equals $1$.
Hence, after taking all constants with respect to $\vec{x}_{i}$ out of the integral and 
multiplying/dividing the integral by the constant $(\sqrt{\vert 2\pi (-2) \mathbf{A}_{i} \vert})^{-1}$, we obtain:

\begin{equation}\label{eqIntegralPdfXiXj}
\begin{array}{lcl}
1 & = & 
\underbrace{\bigintsss\limits_{\mathbb{R}^{k_{i}}}
\frac{1}{\sqrt{\vert 2\pi (-2) \mathbf{A}_{i} \vert}}\exp\bigg[-\frac{1}{2}\left(\vec{x}_{i} +\frac{1}{2}\mathbf{A}_{i}^{-1}\left(\vec{b}_{i}+  \mathbf{E}_{i}^{T}\vec{x}_{j}\right) \right)^{T}\left(-2 \mathbf{A}_{i} \right) \left(\vec{x}_{i} +\frac{1}{2}\mathbf{A}_{i}^{-1}\left(\vec{b}_{i}+  \mathbf{E}_{i}^{T}\vec{x}_{j}\right) \right)
\bigg]\ud \vec{x}_{i}}_{=1}
\\
& &\times \sqrt{\vert 2\pi (-2) \mathbf{A}_{i} \vert} \times \exp\bigg[- \frac{1}{4}\left(\vec{b}_{i}+  \mathbf{E}_{i}^{T}\vec{x}_{j}\right)^{T} \mathbf{A}_{i}^{-1}\left(\vec{b}_{i}+  \mathbf{E}_{i}^{T}\vec{x}_{j}\right)
+ \left(\vec{x}_{j}^T \mathbf{C}_{i} \vec{x}_{j} + \vec{x}_{j}^T \vec{d}_{i}  + f_{i}\right)\bigg]
\\
 & = &
\exp\bigg[\frac{k_{i}}{2}\log(2\pi) + \frac{1}{2}\log\vert (-2)\mathbf{A}_{i}\vert
\bigg] \times \exp\bigg[- \frac{1}{4}\left(\vec{b}_{i}+  \mathbf{E}_{i}^{T}\vec{x}_{j}\right)^{T} \mathbf{A}_{i}^{-1}\left(\vec{b}_{i}+  \mathbf{E}_{i}^{T}\vec{x}_{j}\right)
+ \left(\vec{x}_{j}^T \mathbf{C}_{i} \vec{x}_{j} + \vec{x}_{j}^T \vec{d}_{i}  + f_{i}\right)\bigg]
\\
 & = &
\exp\bigg[
\vec{x}_{j}^{T}\left( \mathbf{C}_{i}  - \frac{1}{4}\mathbf{E}_{i}\mathbf{A}_{i}^{-1}\mathbf{E}_{i}^{T}\right)\vec{x}_{j}
+\vec{x}_{j}^T \left(\vec{d}_{i}- \frac{1}{2}\mathbf{E}_{i}\mathbf{A}_{i}^{-1}\vec{b}_{i}\right)
+f_{i}+\frac{k_{i}}{2}\log(2\pi) + \frac{1}{2}\log\big(\vert (-2)\mathbf{A}_{i}\vert\big) - \frac{1}{4}\vec{b}_{i}^{T}\mathbf{A}_{i}^{-1}\vec{b}_{i}
\bigg].
\end{array}
\end{equation}
\noindent When calculating the integral in Eq. \eqref{eqIntegralPdfXiXj} above, we have used the fact that the 
matrix $(-2)\mathbf{A}_{i}$ is a symmetric positive--definite  matrix as it is the \del{double} negative of the symmetric 
negative--definite matrix $2\mathbf{A}_{i}$. Hence, the so constructed function below the integral 
in Eq. \eqref{eqIntegralPdfXiXj} is a $k_{i}$-variate Gaussian pdf with mean vector 
$-\frac{1}{2}\mathbf{A}_{i}^{-1}\left(\vec{b}_{i}+  \mathbf{E}_{i}^{T}\vec{x}_{j}\right)$ 
and variance--covariance matrix $(-2)\mathbf{A}_{i}$. 

By definition, $\mathbf{A}_{i}$, $\vec{b}_{i}$, $\mathbf{C}_{i}$,  $\vec{d}_{i}$, $\mathbf{E}_{i}$, $f_{i}$ 
are constant with respect to $\vec{x}_{j}$. Therefore, Eq. \ref{eqIntegralPdfXiXj} has to hold for all $\vec{x}_{j}$. 
This implies the relationships:
\be\label{eqpdfXiXjConstr}
\begin{array}{rcl}
\mathbf{C}_{i} & = & \mathbf{E}_{i}\mathbf{A}_{i}^{-T}\mathbf{E}_{i}^{T}, \\
\vec{d}_{i} & = & 2\mathbf{E}_{i}\mathbf{A}_{i}^{-1}\vec{b}_{i}, \\
f_{i} & = & \frac{1}{4} b_{i}^{T}\mathbf{A}_{i}^{-1}b_{i} - \frac{k_{i}}{2}\log(2\pi)  - \frac{1}{2}\log\big( \vert (-2)\mathbf{A}_{i} \vert\big).
\end{array}
\ee

Next, we define $\mathbf{V}_{i}:=(-\frac{1}{2})\mathbf{A}_{i}^{-1}$, $\vec{\omega}_{i}:=(-\frac{1}{2})\mathbf{A}_{i}^{-1}\vec{b}_{i}$
and $\mathbf{\Phi}_{i}:=(-\frac{1}{2})\mathbf{A}_{i}^{-1}\mathbf{E}_{i}^{T}$. Since $\mathbf{A}_{i}$ 
is symmetric negative--definite, $\mathbf{V}_{i}$ is symmetric positive--definite. Combining the above three definitions
with Eq. \eqref{eqpdfXiXjConstr} and expressing $\mathbf{A}_{i}$, $\vec{b}_{i}$, $\mathbf{C}_{i}$, 
$\vec{d}_{i}$, $\mathbf{E}_{i}$, $f_{i}$ in terms of $\vec{\omega}_{i}$, $\mathbf{\Phi}_{i}$ and $\mathbf{V}_{i}$, 
we obtain again Eq. \eqref{eqAbCdEfVoP}. Then, we can follow the equivalences in backward direction 
(Eqs. \eqref{eqAbCdEfVoP}$\rightarrow$\eqref{eqmvnorm2ndForm}$\rightarrow$\eqref{eqmvnorm}) 
to prove that the pdf defined in Eq. \eqref{pdfXiXj} is equivalent to the Gaussian pdf 
defined in terms of $\vec{\omega}_{i}$, $\mathbf{\Phi}_{i}$ and $\mathbf{V}_{i}$, Eq. \eqref{eqmvnorm}. 
We note also that $\vec{\omega}_{i}$, $\mathbf{\Phi}_{i}$ and $\mathbf{V}_{i}$ defined above are constant 
with respect to $\vec{x}_{j}$, because they are defined in terms of $\mathbf{A}_{i}$, $\vec{b}_{i}$ and $\mathbf{E}_{i}$, 
which are constant with respect to $\vec{x}_{j}$ by definition. With that we proved condition $2$ 
of Def. \ref{defGLinv}. Since $\mathcal{M}$ satisfies condition $1$ of Def. \ref{defGLinv} 
by the first sentence in the Theorem, it follows that $\mathcal{M}$ belongs to the $\mathcal{G}_{LInv}$--family.
\end{proof}
\end{theorem}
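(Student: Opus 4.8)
The plan is to run the argument of Theorem~\ref{thmpcmNorm} in reverse: starting from the assumed quadratic--exponential form \eqref{pdfXiXj}, I would reconstruct a Gaussian transition density whose mean is affine in $\vec{x}_{j}$ and whose covariance does not depend on $\vec{x}_{j}$, thereby verifying condition~$2$ of Definition~\ref{defGLinv}; condition~$1$ is granted by hypothesis. The key observation is that, for each fixed value of the parent state $\vec{x}_{j}$, the right--hand side of \eqref{pdfXiXj} is the exponential of a quadratic in $\vec{x}_{i}$ whose leading matrix $\mathbf{A}_{i}$ is symmetric negative--definite, so it is proportional to a genuine (integrable) Gaussian density on $\mathbb{R}^{k_{i}}$. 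Negative--definiteness is exactly what makes the Gaussian normalizable and $-\tfrac{1}{2}\mathbf{A}_{i}^{-1}$ a legitimate covariance; together with the assumed non--zero support on all of $\mathbb{R}^{k_{i}}$ this rules out degenerate or improper limits.

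First I would complete the square in $\vec{x}_{i}$. Collecting the $\vec{x}_{i}$--dependent terms of \eqref{pdfXiXj}, namely $\vec{x}_{i}^{T}\mathbf{A}_{i}\vec{x}_{i} + \vec{x}_{i}^{T}(\vec{b}_{i} + \mathbf{E}_{i}^{T}\vec{x}_{j})$, I would write them as $(\vec{x}_{i}-\vec{\mu})^{T}\mathbf{A}_{i}(\vec{x}_{i}-\vec{\mu}) - \vec{\mu}^{T}\mathbf{A}_{i}\vec{\mu}$ with $\vec{\mu} = -\tfrac{1}{2}\mathbf{A}_{i}^{-1}(\vec{b}_{i}+\mathbf{E}_{i}^{T}\vec{x}_{j})$, using the symmetry of $\mathbf{A}_{i}$. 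This exhibits the conditional law of $\vec{x}_{i}$ given $\vec{x}_{j}$ as Gaussian with covariance $\mathbf{V}_{i} := -\tfrac{1}{2}\mathbf{A}_{i}^{-1}$, which is symmetric positive--definite and, crucially, independent of $\vec{x}_{j}$ (establishing condition~(2.b)), and with mean $\vec{\mu} = -\tfrac{1}{2}\mathbf{A}_{i}^{-1}\vec{b}_{i} - \tfrac{1}{2}\mathbf{A}_{i}^{-1}\mathbf{E}_{i}^{T}\vec{x}_{j}$, which is affine in $\vec{x}_{j}$. Reading off $\vec{\omega}_{i} := -\tfrac{1}{2}\mathbf{A}_{i}^{-1}\vec{b}_{i}$ and $\mathbf{\Phi}_{i} := -\tfrac{1}{2}\mathbf{A}_{i}^{-1}\mathbf{E}_{i}^{T}$ then yields condition~(2.a). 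Since $\mathbf{A}_{i},\vec{b}_{i},\mathbf{E}_{i}$ are constant in $\vec{x}_{j}$ by hypothesis, so are $\mathbf{V}_{i},\vec{\omega}_{i},\mathbf{\Phi}_{i}$, as required.

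The remaining work is to confirm that the residual terms $\mathbf{C}_{i},\vec{d}_{i},f_{i}$ are forced to equal the values implied by \eqref{eqAbCdEfVoP}, rather than being free. Here I would use the normalization constraint: because \eqref{pdfXiXj} is a probability density in $\vec{x}_{i}$ for \emph{every} $\vec{x}_{j}$, integrating over $\vec{x}_{i}\in\mathbb{R}^{k_{i}}$ must give $1$ identically in $\vec{x}_{j}$. Carrying out the Gaussian integral leaves a factor depending on $\vec{x}_{j}$ through the leftover quadratic $-\vec{\mu}^{T}\mathbf{A}_{i}\vec{\mu} = -\tfrac{1}{4}(\vec{b}_{i}+\mathbf{E}_{i}^{T}\vec{x}_{j})^{T}\mathbf{A}_{i}^{-1}(\vec{b}_{i}+\mathbf{E}_{i}^{T}\vec{x}_{j})$, together with $\vec{x}_{j}^{T}\mathbf{C}_{i}\vec{x}_{j}+\vec{x}_{j}^{T}\vec{d}_{i}+f_{i}$ and the normalizing constant. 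Demanding that this exponent vanish identically as a function of $\vec{x}_{j}$ is a polynomial identity, so I would match the quadratic, linear, and constant parts separately; this determines $\mathbf{C}_{i},\vec{d}_{i},f_{i}$ in terms of $\mathbf{A}_{i},\vec{b}_{i},\mathbf{E}_{i}$. Substituting the definitions of $\mathbf{V}_{i},\vec{\omega}_{i},\mathbf{\Phi}_{i}$ reproduces \eqref{eqAbCdEfVoP}, and reversing the algebra of Theorem~\ref{thmpcmNorm} (the chain \eqref{eqAbCdEfVoP}$\to$\eqref{eqmvnorm2ndForm}$\to$\eqref{eqmvnorm}) shows the density equals the Gaussian written in terms of $\vec{\omega}_{i},\mathbf{\Phi}_{i},\mathbf{V}_{i}$, so $\mathcal{M}\in\mathcal{G}_{LInv}$.

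I expect the main obstacle to be exactly this normalization/coefficient--matching step. It is tempting to think that reading off the Gaussian mean and covariance already finishes the proof, but one must justify that $\mathbf{C}_{i},\vec{d}_{i},f_{i}$ cannot be chosen independently: the identity ``integral $=1$ for all $\vec{x}_{j}$'' is precisely what couples them to $\mathbf{A}_{i},\vec{b}_{i},\mathbf{E}_{i}$, and making the polynomial--identity argument rigorous (two polynomials in $\vec{x}_{j}$ that agree on all of $\mathbb{R}^{k_{j}}$ have equal coefficients) is the crux. A secondary point to treat carefully is the finiteness of the Gaussian integral, which is where the symmetric negative--definiteness of $\mathbf{A}_{i}$ and the full--support hypothesis are indispensable.
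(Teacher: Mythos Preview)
Your proposal is correct and follows essentially the same route as the paper: complete the square in $\vec{x}_{i}$ to recognize a Gaussian with mean $-\tfrac{1}{2}\mathbf{A}_{i}^{-1}(\vec{b}_{i}+\mathbf{E}_{i}^{T}\vec{x}_{j})$ and covariance $-\tfrac{1}{2}\mathbf{A}_{i}^{-1}$, then use the normalization $\int pdf(\vec{x}_{i}\mid\vec{x}_{j})\,\ud\vec{x}_{i}=1$ for all $\vec{x}_{j}$ to force $\mathbf{C}_{i},\vec{d}_{i},f_{i}$ to the values in \eqref{eqAbCdEfVoP}, and finally reverse the chain \eqref{eqAbCdEfVoP}$\to$\eqref{eqmvnorm2ndForm}$\to$\eqref{eqmvnorm}. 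The only difference is cosmetic ordering---you read off $\mathbf{V}_{i},\vec{\omega}_{i},\mathbf{\Phi}_{i}$ before invoking normalization, whereas the paper integrates first---but the logical content and the identification of the coefficient--matching step as the crux are identical.
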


\begin{remark}
In Eq. \eqref{pdfXiXj}, it suffices to consider symmetric negative--definite matrices $\mathbf{A}$ only. 
We remind that, by definition, a matrix $\mathbf{A}$ is negative--definite iff $\vec{x}^T\mathbf{A}\vec{x}<0$ 
for every $\vec{x}\neq\vec{0}$. Considering non--symmetric negative--definite matrices $\mathbf{A}$ does 
not extend the family of pdfs represented by Eq. \eqref{pdfXiXj}. In particular, for any square 
negative--definite matrix $\mathbf{Q}$, and (of appropriate size) vector $\vec{u}$, it 
holds that $\vec{u}^{T}\mathbf{Q}\vec{u}=\vec{u}^{T}\big[\frac{1}{2}(\mathbf{Q}+\mathbf{Q}^{T})\big]\vec{u}$ 
and the matrix $\big[\frac{1}{2}(\mathbf{Q}+\mathbf{Q}^{T})\big]$ is symmetric negative--definite. 
Hence if one took in Eq. \eqref{pdfXiXj} a non--symmetric $\mathbf{A}$, then the value of the pdf would be the 
same as if one had taken the symmetric negative--definite matrix $\big[\frac{1}{2}(\mathbf{A}+\mathbf{A}^{T})\big]$. 
\del{
Furthermore, considering non--negative--definite matrices $\mathbf{A}_{i}$ does not extend the family of pdfs 
represented by Eq. \ref{pdfXiXj} either. If one considers a non--negative--definite matrix $\mathbf{A}_{i}$, 
Eq. \eqref{pdfXiXj} could not define a probability density function, because the matrix $(-2)\mathbf{A}_{i}$ 
in Eq. \eqref{eqIntegralPdfXiXj} would not be positive--definite, and the integral over the domain of $\vec{x}_{i}$ 
will be infinite.
}
\KBcom{Not true. If we drop the assumption of non-zero support and take compact support, then we can have pos-def As. 
Of course we will move then out of the GLInv family and the model will not be normal. This is a very delicate
matter and hence I suggest we drop the whole remark. Since we present the model through omega, Phi, V then
the AbCdEf are presented more as a computational approach and so the reason for the remark is unclear. 
I suggest dropping it.
}\VMcom{I accept the deletion. I still think that the remark as it is written now is quite clear, so I suggest including it. If there is a simpler formulation of the possibility to have pos-def As by omitting non-zero support requirement, we can extend the remark later on.}
\end{remark}

Based on the above theorem and remark, we conclude that the $\mathcal{G}_{LInv}$-family is identical with the 
scope of the fast likelihood computation framework. This implies that to define any new model within the framework, 
it is sufficient to define the functions $\vec{\omega}$, $\mathbf{\Phi}$ and $\mathbf{V}$ for each  in the tree. 
This is the key idea in developing the  R-package \pkg{PCMBase} described in the next section.

\section[Software]{The \pkg{PCMBase} \proglang{R} package}\label{secSoftware}
The \pkg{PCMBase} package takes advantage of the fact that the quadratic polynomial representation of the 
likelihood function 
is valid for all models in the $\mathcal{G}_{LInv}$ family. Hence, once the analytical integration over the 
internal nodes has 
been implemented, the addition of a new $\mathcal{G}_{LInv}$ model to the framework boils down to defining the transition 
density in terms of the functions $\vec{\omega}$, $\mathbf{\Phi}$ and $\mathbf{V}$ (Def. \ref{defGLinv}).  
\pkg{PCMBase} implements this idea, based on the concept of inheritance between programming modules: 
Eqs. \eqref{eqAbCdEfVoP}, \eqref{pdfLmn_alltips}, \eqref{pdfLmn_tips}, \eqref{pdfLmn_non-tips} are implemented in a 
base module called ``GaussianPCM'', which is abstract with respect to $\vec{\omega}$, $\mathbf{\Phi}$ and $\mathbf{V}$ 
(Fig. \ref{figPCMBaseDiagram}). These functions are provided in inheriting  modules definable for each $\mathcal{G}_{LInv}$ 
model. This hierarchical design is presented in Fig. \ref{figPCMBaseDiagram}. 

\begin{figure}
\begin{center}
\includegraphics[width=1\textwidth]{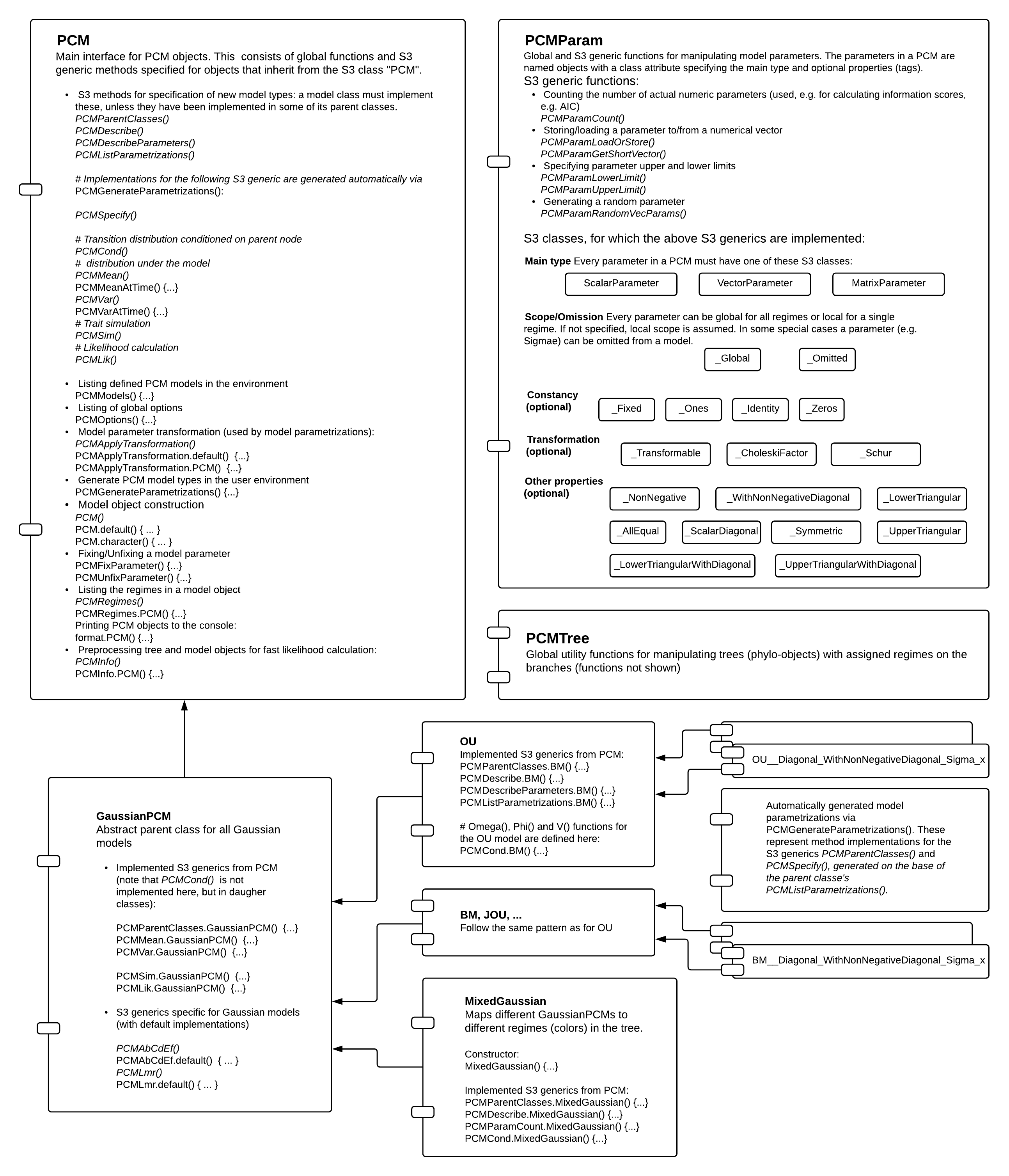}
\end{center}
\caption{\textbf{An overview of the PCMBase package.} Each box represents a module. The modules ``PCM'', 
``PCMParam'' and ``PCMTree'' define the end-user interface. In particular, the module ``PCM'' 
defines the interface for adding model extensions. Function names written in \textit{italic} 
style denote S3 generic declarations. These functions can be defined or overwritten by inheriting modules, 
to provide model-specific behavior. The module ``GaussianPCM'' implements the pruning-wise likelihood evaluation. 
The functions $\vec{\omega}$, $\mathbf{\Phi}$ and $\mathbf{V}$ for each model within the framework 
must be implemented in specifications of the S3 generic function ``PCMCond''. It is possible to define 
parametrizations restricting particular model parameters, e.g. forcing a matrix parameter to be a diagonal matrix.}
\label{figPCMBaseDiagram}
\end{figure}

\subsection[PCMBaseExtend]{Extending \pkg{PCMBase}}\label{subSecPCMBaseExtend}

Extending the \pkg{PCMBase} functionality can be achieved in two ways:
\begin{enumerate}
\item \textbf{Adding a new model}. It is possible to write a new module inheriting from the module ``GaussianPCM'' 
and implementing its own version of the functions $\vec{\omega}$, $\mathbf{\Phi}$ and $\mathbf{V}$;
\item \textbf{Adding a parameterization}. It is possible to restrict or apply a transformation to some of the parameters 
of an already defined model (Fig. \ref{figPCMBaseDiagram}). 
\end{enumerate}

\subsection[PCMBaseUse]{Using the package}\label{subSecPCMBaseUse}

Figure \ref{figPCMBaseCheatSheet} shows the runtime objects and use-cases currently implemented in the \pkg{PCMBase} 
package. 
\begin{figure}
\begin{center}
\includegraphics[width=.9\textwidth]{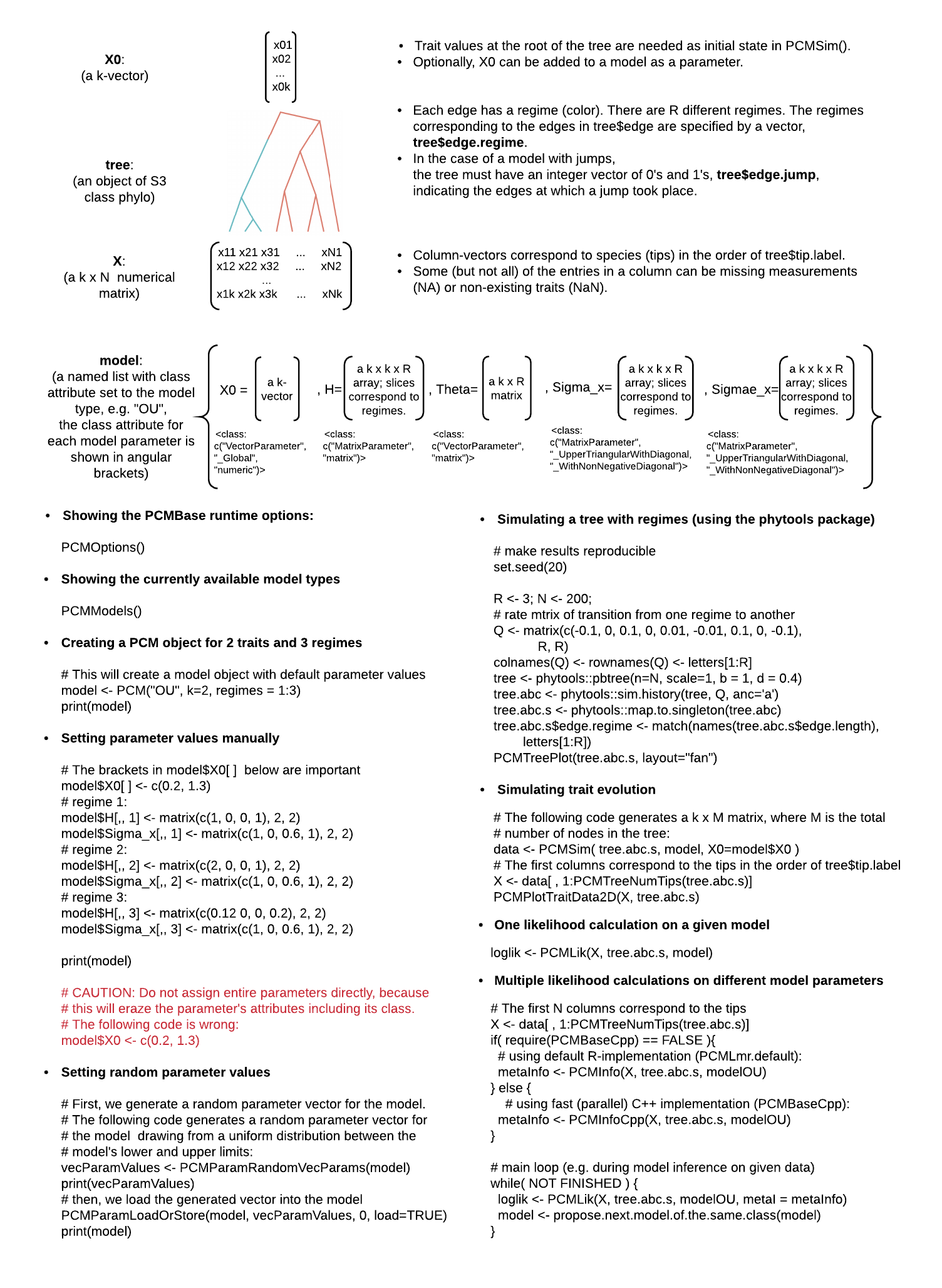}
\end{center}
\caption{\textbf{Using the \pkg{PCMBase} package} The main runtime objects are depicted on the top of the figure, 
followed by coding examples for the specific use-cases.}
\label{figPCMBaseCheatSheet}
\end{figure}
Once the modules for the models of interest have been implemented, the \pkg{PCMBase} package can be used to:
\begin{itemize}
\item Creating a model object. The end-user function for creating a model object is \code{PCM()}. A model object 
represents an  S3 object, 
that is, a named list with members corresponding to the model parameters, such as 
\code{H}, \code{Sigma\_x} and \code{Sigmae\_x}, and a class attribute equalling the model type, e.g. \code{BM} or \code{OU}. 
\item Simulating the evolution of a set of 
continuous traits along a tree, according to a model. The user level function for trait simulation is \code{PCMSim()}. 
Based on the S3 class of its model argument 
\code{PCMSim()} invokes an appropriate specification of the S3 generic function \code{PCMCond()}, which creates a 
random sampler 
from the trait distribution at the end of a branch, given the model, the branch length and the trait values 
at the beginning of the branch.  
\item Calculating the (log--)likelihood of a model, 
given a tree and trait values at its tips. The user level function for likelihood calculation is \code{PCMLik()}. 
This function is implemented in the ``GaussianPCM'' module and inherited by all of its daughter modules. 
The calculation proceeds in four steps: 
\begin{enumerate}
\item Initially, the model-specific functions $\vec{\omega}$, $\mathbf{\Phi}$ and $\mathbf{V}$ are calculated based
on the model parameters $\Theta$ and the branch lengths $t_i$ (note that this operation does not need the trait values to 
be present at any tip or internal node in the tree). 
\item Then, the coefficients $\mathbf{A}_{i}$, $\vec{b}_{i}$, $\mathbf{C}_{i}$, $\vec{d}_{i}$, $\mathbf{E}_{i}$ 
and $f_{i}$ are calculated for each internal and tip node in the tree based on the values $\vec{\omega}$, $\mathbf{\Phi}$ 
and $\mathbf{V}$ calculated in the previous step. This calculation is done in the function \code{PCMAbCdEf()} 
within the module 
``GaussianPCM'' which, again, is inherited by all model modules (see Fig. \ref{figPCMBaseDiagram}). 
\item Next, the coefficients $\mathbf{L}_{i}$, $\vec{m}_{i}$, $r_{i}$ 
are calculated based on the trait values at the tips, the values of $\mathbf{A}_{i}$, $\vec{b}_{i}$, $\mathbf{C}_{i}$, 
$\vec{d}_{i}$, $\mathbf{E}_{i}$ 
and $f_{i}$ calculated in the previous step, and the recursive procedure described in Section \ref{subsecmvFast}, 
Eqs. \eqref{pdfLmn_alltips}, \eqref{pdfLmn_tips} and \eqref{pdfLmn_non-tips}. 
\item Finally, the (log--)likelihood value is calculated using the formula
\be\label{pdfXX0X0}
\ell(\Theta)=pdf(\mathbf{X}|\vec{x}_{0},\mathbb{T},\mathbf{\Theta})=\exp\left(\vec{x}_{0}^{T}\mathbf{L}_{0}\vec{x}_{0}+
\vec{x}_{0}^{T}\vec{m}_{0}+r_{0}\right),
\ee
\noindent where $\Theta$ denotes the set of model parameters and $\vec{x}_{0}$ is 
treated either as a parameter (specified as a member 
\code{X0} in the model object) or as 
the optimum point of the above equation given by:
\be\label{optimumX0}
\vec{x}_{0}=-0.5 L_{0}^{-1} \vec{m}_{0}.
\ee
\end{enumerate}
\end{itemize}

The \pkg{PCMBase}-package is licensed under the General Public Licence (GPL) version 3.0. The package and the documentation are accessible from https://github.com/venelin/PCMBase.

\subsection[PCMBaseCpp]{Parallel likelihood calculation with the \pkg{PCMBaseCpp} add--in}\label{subsecPCMBaseCpp}
For faster likelihood calculation, it is possible to use multiple processor cores to perform the calculation of 
$\vec{\omega}$, $\mathbf{\Phi}$, $\mathbf{V}$, $\mathbf{A}_{i}$, $\vec{b}_{i}$, $\mathbf{C}_{i}$, $\vec{d}_{i}$,
$\mathbf{E}_{i}$ and $f_{i}$ in parallel. This is \del{legal}\TS{possible}, given the fact that these coefficients depend solely on the 
model parameters and on the branch lengths in the tree, see, e.g. Eqs. \eqref{eqmvOUmoments} and \eqref{eqmvOUVoP}. 
The calculation of the coefficients $\mathbf{L}_{i}$, $\vec{m}_{i}$, $r_{i}$ is not fully parallelizable but can be 
divided in parallelizable steps (generations) using a parallel post--order traversal algorithm  \citep{Mitov:2017eg}. 
We implemented  this idea in the accompanying package \pkg{PCMBaseCpp},  built on top of the \pkg{Armadillo} template 
library for linear algebra \citep{Sanderson:2016cs}, the \pkg{Rcpp} package for seamless \proglang{R} and \proglang{C++} 
integration \citep{Eddelbuettel:2013if} and the \pkg{SPLITT} library for parallel tree traversal \citep{Mitov:2017eg}. 

We compared the performance of the multivariate serial and parallel \pkg{PCMBase} implementation against other univariate and
multivariate implementations in a separate work \citep{Mitov:2017eg}. As shown in \citep{Mitov:2017eg}, 
on contemporary multi-core CPUs, the parallel \pkg{PCMBaseCpp} implementation can speed up the likelihood 
calculation up to an order of magnitude starting with 2 traits and trees of 100 to 10'000 tips. For univariate OU models, 
it can be beneficial to implement stand-alone classes bypassing the complex $k\times k$ matrix operations involved in 
the multivariate case. As shown in \citep{Mitov:2017eg}, this can result in up to $100$ 
fold faster likelihood calculation in the stand-alone class implementation. The use of \pkg{PCMBaseCpp} as a 
C++ back-end is recommended even if not using multi-core parallelization, because serial \proglang{C++} code 
execution is still nearly $100$ times faster than the equivalent implementation written in \proglang{R} 
(\proglang{R}--version at time of writing this article was $3.5$).

The \pkg{PCMBaseCpp}-package is licensed under the General Public Licence (GPL) version 3.0. The package is accessible from https://github.com/venelin/PCMBase.

\section[Standard extensions]{Standard extensions}\label{secSpecialIssues}

\subsection[Missing values]{Missing values}\label{sbsecNAs}
The trait measurement data are the observations at the tips. If a tip is described by a suite of traits it can 
easily happen that some of them are missing, either due to missing measurement or because the corresponding trait 
does not exist for the species. 
Removing such a tip from any further analysis would be wasting information, i.e. the observed data for the tip. 
We notice that missing measurements for existing traits correspond to the marginal distribution of the observed 
measurements. In contrast, non-existing traits correspond to reduced dimensionality of the trait vector for the tip in 
question. Our computational framework keeps track of both of these cases by carefully accounting for the dimensionality 
of the trait vectors at the tips and the internal nodes and appropriately marginalizing during the integration part, as 
described below (see also Thm. \ref{thmmvOU} for examples). The input data is passed as a matrix 
(rows---trait measurements, columns---different species) the missing measurements have to be indicated as \code{NA}s, 
whereas the non-existing traits have to be indicated as \code{NaN}s (fig. \ref{figTreeNotation}).


We now turn to describing the technicalities of the mechanism taking care of the missing data.  
We use a vector of positive integers, $\vec{k}_{j}$, to denote the ordered set of active coordinates for a node $j$.
If $j$ is a tip, then $\vec{k}_{j}$ gives the indices of all non--missing entries in the trait vector for $j$; 
for an internal (unmeasured) node this gives the possibility to make some trait inactive.
The cardinality of a vector is denoted with $\vert \vec{k} \vert$. 
For a vector, the notation $\vec{\theta}[\vec{k}]$
means the vector of elements of $\vec{\theta}$ on the coordinates contained in $\vec{k}$, while
for a matrix $\mathbf{H}[\vec{k}_{1},\vec{k}_{2}]$ means the matrix $\mathbf{H}$ with only the rows 
on the coordinates contained in $\vec{k}_{1}$ and columns contained in $\vec{k}_{2}$. For example
take $\vec{\theta}=(10,11,12,13)$ and $\vec{k}=(1,3)$, then $\vec{\theta}[\vec{k}]=(10,12)$,
while if $\vec{k}_{1}=(1,3)$, $\vec{k}_{2}=(2,4)$ and 

$$
\mathbf{H}=\left[
\begin{array}{cccc}
10 & 11 & 12 & 13 \\
14 & 15 & 16 & 17 \\
18 & 19 & 20 & 21 \\
22 & 23 & 24 & 25
\end{array}
\right],
$$
then 

$$
\mathbf{H}[\vec{k}_{1},\vec{k}_{2}]=\left[
\begin{array}{cccc}
11 & 13 \\
19 & 21 \\
\end{array}
\right].
$$
If a vector or matrix does not have any indication on which entries it is retained, then it means
that we use the whole vector or matrix.
All of the above notation is graphically represented in Fig. \ref{figTreeNotation}.

In our framework, we have the representation that
$\vec{x}_{i} \in \mathbb{R}^{k_{i}}$ conditional on $\vec{x}_{j} \in \mathbb{R}^{k_{j}}$
is $\mathcal{N}(\vec{\omega}_{i} + \mathbf{\Phi}_{i}\vec{x}_{j},\mathbf{V}_{i})$ distributed. By default, \pkg{PCMBase} constructs the coordinate vectors $\vec{k}_{i}$ and $\vec{k}_{j}$ in the following way: for a tip-node, $i$, $\vec{k}_{i}$ contains all observed (neither \code{NA} nor \code{NaN}) 
coordinates; 
for an internal node, $i$ or $j$, the corresponding coordinate vector ($\vec{k}_{i}$ or $\vec{k}_{j}$) contains the coordinates denoting traits that exist (are not \code{NaN}) for at least one of the 
tips descending from that node (Fig. \ref{figTreeNotation}). Biologically, this treatment reflects a scenario where all of the traits with at least one non-\code{NaN} entry for at least one species (i.e. tip) in the tree must have existed for the root but some of the traits have subsequently disappeared on some lineages of the tree. In particular, if a trait exists for a given tip in the tree, it is assumed that it has existed for all of its ancestors up to the root of the tree. Conversely, if the trait does not exist for a tip, then it has not existed for any of its ancestors up to the first ancestor shared with a tip for which the trait does exist. Different biological scenarios are possible, e.g. assuming that some of the traits did not exist at the root-node but have appeared later for some on the lineages. These can be implemented by accordingly specifying the coordinate vectors.

During likelihood calculation for given trait data, a tree and a trait evolutionary model, the elements $\vec{\omega}_{i}$, $\mathbf{\Phi}_{i}$ and $\mathbf{V}_{i}$ of appropriate dimension are calculated for each non-root node $i$ in the tree. This is done in two steps:
\begin{enumerate}
\item The general rule of the model is used to calculate the elements $\tilde{\vec{\omega}}_{i}$, $\tilde{\mathbf{\Phi}}_{i}$, $\tilde{\mathbf{V}}_{i}$ of full dimensionality ($k$), i.e. assuming that all traits exist;
\item Denoting by $j$ the parent node of $i$, the elements $\vec{\omega}_{i}$, $\mathbf{\Phi}_{i}$ and $\mathbf{V}_{i}$ specific for the data in question are obtained as:
\be\label{eqTreatNAs}
\begin{array}{rcl}
\vec{\omega}_{i} &=& \tilde{\vec{\omega}}_{i}[\vec{k}_{i}], \\
 \mathbf{\Phi}_{i}&=&  \tilde{\mathbf{\Phi}}_{i}[\vec{k}_{i},\vec{k}_{j}], \\
\mathbf{V}_{i}&=& \tilde{\mathbf{V}}_{i}[\vec{k}_{i},\vec{k}_{i}],
\end{array}
\ee
\noindent where $\vec{k}_{i}$ and $\vec{k}_{j}$ denote the corresponding coordinate vectors at nodes $i$ and $j$.
\end{enumerate}

%
%

\subsection[Measurement error]{Measurement error}\label{sbsecMerror}
Commonly in PCMs the observed values at the tips are averages 
from a number of individuals of each species. Using just these
average values does not take into account the intra--species variability.
Ignoring this can have profound effects on any further estimation
\citep[see][]{Hansen:2012et}. Following the PCM tradition, we call
this intra--species variability a\del{s} measurement error,
but one should remember that it can be due to true biological variability.
Including this variability in our framework is straightforward. One recognizes,
which component of the quadratic polynomial representation
corresponds to the variance of the tip and augments it by the measurement error variance
matrix, see the formulae in Section \ref{secpcmOU}. From the user interface point
of view this is a bit more complicated. The measurement error variance
matrix is specific to each tip. Therefore in this situation the user has to define
for each tip a different regime, with a regime specific variance matrix
(called \code{Sigmae\_x} in the implemented by us classes). Of course other model
parameters can also be regime specific, e.g. the deterministic optima
(\code{Theta} in the implemented by us classes). 

\subsection[Non--ultrametric trees and multifurcations]{Non--ultrametric trees and multifurcations}\label{sbsecNonUltra}
If one has only measurements from contemporary species, then the phylogeny
describing them is naturally an ultrametric one. However,  if for some 
reason the phylogeny is not ultrametric, e.g. it contains extinct species,
then the quadratic polynomial framework can be directly employed. 
Because each branch is treated separately, it does not matter whether 
the tree is or is not ultrametric. Therefore, there is no need
to search for transformations as in the $3$--point structure based methods.
This we believe should make the \pkg{PCMBase} package very straightforward to use. 
Furthermore, from the proof of Thm. \ref{thmMvFast} it is obvious that 
the tree does not need to be binary. Therefore, this adds even more flexibility
to the user, they may use trees with polytomies.

\subsection[Punctuated equilibrium]{Punctuated equilibrium}\label{sbsecJumps}
It is an ongoing debate in evolutionary biology whether 
the dominant mode of evolution is a gradual one or whether, during
brief periods of time, species undergo rapid change. Any gradual 
model of evolution can be extended to have a punctuated component
by including jumps. 
Jump mechanisms, like jumps at the start of specific lineages or common jumps for daughter lineages,
have to be developed on a per model basis, see Section 
\ref{sbsecOUjumps} for an example.
One current restriction is that \pkg{PCMBase}
assumes that lineages do not interact after speciation. 
It is not possible to implement a model class such that
if one daughter lineage jumps the other does not (this is communication
between lineages after speciation). Therefore, to have such
a situation the user needs to by themselves code on which 
lineages a jump can take place and on which it cannot.
This can be easily achieved using the jumps mechanism of \pkg{PCMBase}. 
The \code{phylo} phylogenetic tree object can be enhanced by a
\code{edge.jump} binary vector. The length of this vector equals the 
number of edges in the tree. A $0$ entry indicates that no jump
took place on the corresponding branch, while a $1$ entry that it did. 

\section[Ornstein--Uhlenbeck type models]{Ornstein--Uhlenbeck type models}\label{secpcmOU}
\begin{figure}
\begin{center}
\includegraphics[width=1\textwidth]{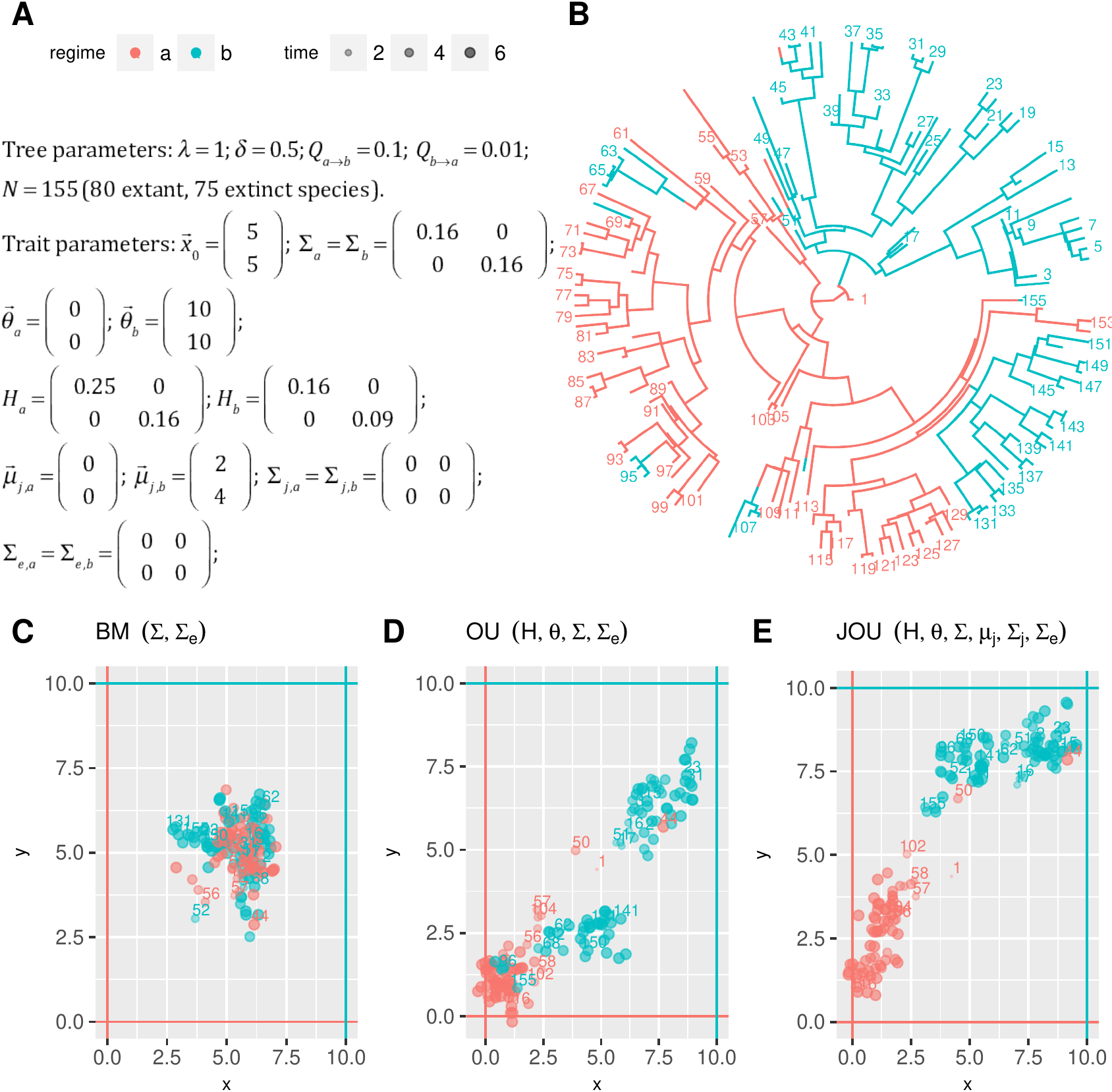}
\end{center}
\caption{Simulations of trait evolution under four PCMs. 
A: parameters of the tree simulation ($\lambda$: speciation--rate; $\delta$: extinction--rate; 
$Q_{a\rightarrow b}$: migration rate from habitat ``a'' to habitat ``b''; $Q_{b\rightarrow a}$: 
vice--versa of $Q_{a\rightarrow b}$. The other parameters are described in the text. 
B: A birth--death phylogenetic tree generated using the function \code{pbtree()} 
and \code{sim.history()} from the package \pkg{phytools} \citep{Revell:2011ku}. 
C---E: scatter plots of the traits observed at the tips of the tree after random simulation using the 
function \code{PCMSim()} of the \pkg{PCMBase} package.}
\label{figPCMsim}
\end{figure}

\subsection[The phylogenetic Ornstein--Uhlenbeck process]{The phylogenetic Ornstein--Uhlenbeck process}
Currently the Ornstein--Uhlenbeck process is the workhorse of the
phylogenetic comparative methods framework. 
Since its introduction by \citet{Hansen:1997ek} it has been considered in detail
with multiple software implementations
\citep[e.g.][to name a few]{Bartoszek:2012fw,Beaulieu:2012ex,Butler:2004ce,Clavel:2015hc,FitzJohn:2010ch,Goolsby:2016fp,Hansen:2008gt,Ho:2014ge}

In the most general form, the multivariate Ornstein--Uhlenbeck process describes the
evolution of a $k$--dimensional suite of traits $\vec{x}\in \mathbb{R}^{k}$ over
a period of time by the following stochastic differential equation 

\be\label{eqmvOU}
\ud \vec{x}(t) = -\mathbf{H}\left(\vec{x}(t)-\vec{\theta}(t) \right)\ud t + \mathbf{\Sigma}_{x}\ud \vec{W}(t),
\ee
$\mathbf{H}\in \mathbb{R}^{k\times k}$, $\vec{\theta}(t)\in \mathbb{R}^{k}$
and $\mathbf{\Sigma}_{x}\in \mathbb{R}^{k\times k}$.
Notice that when $\mathbf{H}=\mathbf{0}$, we obtain a Brownian motion model.

There is no current software package, in the case of phylogenetic OU models,
that allows for an arbitrary form of the matrix $\mathbf{H}$.
Except for the Brownian motion case, nearly all
assume that $\mathbf{H}$ has to be symmetric--positive--definite (note that
this encompasses the single trait case).
\pkg{mvMORPH} \citep{Clavel:2015hc}, \pkg{SLOUCH} \citep{Hansen:2008gt}
and \pkg{mvSLOUCH} \citep{Bartoszek:2012fw} seem to be the only
exceptions. \pkg{mvMORPH} and \pkg{mvSLOUCH} allow for a general invertible
$\mathbf{H}$ 
(with options to restrict it to diagonal, triangular, symmetric positive--definite,
positive eigenvalues, real eigenvalues or generally invertible).
Furthermore, \pkg{mvSLOUCH} allows for a special singular structure of $\mathbf{H}$.
The matrix has to have in the upper--left--hand corner an invertible matrix (\pkg{SLOUCH}, the univariate
predecessor of \pkg{mvSLOUCH} has a scalar here),
arbitrary values to the right and $\mathbf{0}$ below.
This type of model is called an Ornstein--Uhlenbeck--Brownian motion (OUBM) model.
In contrast, when $\mathbf{H}$ is non--singular the model is called an
Ornstein--Uhlenbeck--Ornstein--Uhlenbeck (OUOU) one, some variables
are labelled as predictors while the rest as responses.

It is of course not satisfactory to have restrictions
on the form of $\mathbf{H}$. Different setups
have different biological interpretations with regards
to modelling causation \citep[see][]{Bartoszek:2012fw,Reitan:2012hi}.
In particular singular matrices will be interesting as they will correspond
to certain linear combinations of traits under selection pressures while other linear combinations are
free of this. The OUBM model is a special case where a pre--defined group
of traits is assumed to evolve marginally as a Brownian motion.
Of course a more general setup is desirable and actually, as we show in this
work, possible.

Here the the only assumption we make on $\mathbf{H}$ is that it possesses an eigendecomposition,
$\mathbf{H}=\mathbf{P}\mathbf{\Lambda}\mathbf{P}^{-1}$
($\mathbf{\Lambda}$ is a diagonal matrix, and the $i$--th element of the diagonal is denoted as $\lambda_{i}$).
In particular $\mathbf{\Lambda}$ can be singular, i.e. some eigenvalues are $0$ and furthermore the
eigenvalues/eigenvectors are allowed to be complex.

In this work we assume that $\mathbf{\Sigma}_{x}$ is upper triangular (alternatively lower).
Despite how it looks at first sight,
this is not any sort of restriction, as in the likelihood
we have only $\mathbf{\Sigma}:=\mathbf{\Sigma}_{x}\mathbf{\Sigma}_{x}^{T}$.
We furthermore assume that  $\mathbf{\Sigma}$ is non--singular,
otherwise the whole model would be singular from a statistics point of view.

Most OU model implementations assume that the deterministic optimum $\vec{\theta}_{t}$
is constant along each branch. Different branches may have different levels
of it but regime switches along a branch are not allowed
\citep[however,][are exceptions as they attempt to infer points of switching]{Bastide:2018eq,Ingram:2013cm,Khabbazian:2016fr}.
The \pkg{PCMBase} package does not make any inference but allows for regime switches inside a branch,
in the sense that the user (or software using \pkg{PCMBase}'s functionality) has to split the branch
into branches having constant regimes. 


If we assume that the process starts at a value $\vec{x}(0)=\vec{x}_{0}$,
then after evolution over time $t$ (assuming all parameters are constant on this interval)
it will be normally distributed
with mean vector and variance--covariance matrix \citep[Eqs. (A.1, B.2)][]{Bartoszek:2012fw}

\be
\begin{array}{rcl}\label{eqmvOUmoments}
\Expectation{\vec{x}} (t) & = & e^{-\mathbf{H}t} \vec{x}_{0} + \left(\mathbf{I}- e^{-\mathbf{H}t}\right) \vec{\theta} \in \mathbb{R}^{k}\\
\variance{\vec{x}} (t) & = & \int_{0}^{t} e^{-\mathbf{H}v} \mathbf{\Sigma}e^{-\mathbf{H}^{T}v} \ud v
\\ & = &
\mathbf{P}\left(
\left[ \frac{1}{\lambda_{i}+\lambda_{j}}\left(1-e^{-(\lambda_{i}+\lambda_{j})t}\right)
\right]_{1\le i,j \le k} \odot \mathbf{P}^{-1}\mathbf{\Sigma}\mathbf{P}^{-T}
\right)\mathbf{P}^{T} \equiv \mathbf{V}(t) \in \mathbb{R}^{k \times k},
\end{array}
\ee
where $\mathbf{I}$ is the identity matrix of appropriate size.
Notice that in the above, $\mathbf{H}$ only enters the moments, through its exponential.
Therefore the moments can be calculated (and hence the distribution is well defined)
for all $\mathbf{H}$, including defective ones.
However, if $\mathbf{H}$ has (as we assumed) an eigendecomposition, then the exponential and in turn
variance formula can be calculated effectively. If $\lambda_{i} = \lambda_{j}=0$, then
the term in the variance has to be treated in the limiting sense
$\lambda^{-1}(1-e^{-\lambda t}) \to t$ with $\lambda \to 0$. Therefore, the variance matrix
is always well defined and never singular for $t>0$.

We assumed that $\mathbf{H}$ has to have an eigendecomposition while the process
is well defined for any $\mathbf{H}$, including defective ones. Calculation
of the matrix exponential for a defective matrix can be done using Jordan block decomposition.
However, we do not provide such functionality, as Jordan block decomposition is numerically unstable and
in fact, we are not aware of any \proglang{R} implementation of it. 
Hence, defective matrices will result in errors. 
However, it is important to remember that defectiveness is the exception and not the rule for matrices. If 
checked for \citep[by e.g. checking if the eigenvector matrix from \code{eigen()}'s output is non--singular, Corollary $7$.$1$.$8$., p. $353$][]{Golub:2013mc}
and handled before calling using our package, it should not cause major issues.
\TScom{why don't we state what $V,\Psi, omega$ is?}\KBcom{This is done in the next section}


We now turn to showing how to construct
the composite parameters found in the proof of Thm. \ref{thmpcmNorm} from the OU process representation
of Eq. \eqref{eqmvOU}. 

To simplify notation we denote the defined in Eq. \eqref{eqmvOUmoments} covariance matrix as
$\tilde{\mathbf{V}}_{i} \equiv \mathbf{V}(t_{i})+
\delta_{i\in\{\mathbb{T}_{0}'s~ tips\}}\mathbf{\Sigma}_{e}^{i}$, 
where the Kronecker $\delta$--symbol is defined as $\delta_{i\in\{\mathbb{T}_{0}'s~ tips\}}=1$ 
if $i$ is a tip of the tree and $\delta_{i\in\{\mathbb{T}_{0}'s~ tips\}}=0$. The 
matrix $\mathbf{\Sigma}_{e}^{i}$ is the measurement error or intra--species variability variance matrix
for tip species $i$.

\begin{theorem}\label{thmmvOU}
Let $\vec{k}_{i}$ be the vector of coordinates on which $\vec{x}_{i}$ is observed, $\vec{k}_{j}$
be the vector of coordinates for $\vec{x}_{j}$ and $\vec{k}$ the full vector of
coordinates.
Using the parameterization found in the proof of Thm. \ref{thmpcmNorm}
a multivariate Ornstein--Uhlenbeck process of evolution 
can be represented as

\be\label{eqmvOUVoP}
\begin{array}{rcl}
\mathbf{V}_{i} & = & \tilde{\mathbf{V}}_{i}[\vec{k}_{i},\vec{k}_{i}] \in \mathbb{R}^{\vert \vec{k}_{i} \vert \times \vert \vec{k}_{i} \vert },\\
\vec{\omega}_{i} & = & \left(\mathbf{I}[\vec{k}_{i},\vec{k}]-e^{-\mathbf{H} t_{i}}[\vec{k}_{i},\vec{k}]\right)\vec{\theta}_{i}[\vec{k}] \in \mathbb{R}^{\vert \vec{k}_{i} \vert},\\
\mathbf{\Phi}_{i} & = & e^{-\mathbf{H} t_{i}}[\vec{k}_{i},\vec{k}_{j}] \in \mathbb{R}^{\vert \vec{k}_{i} \vert \times \vert \vec{k}_{j} \vert }.
\end{array}
\ee
\end{theorem}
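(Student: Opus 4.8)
The plan is to show that the multivariate Ornstein--Uhlenbeck process of Eq. \eqref{eqmvOU} belongs to the $\mathcal{G}_{LInv}$ family and then to read off the composite parameters $\vec{\omega}_{i}$, $\mathbf{\Phi}_{i}$ and $\mathbf{V}_{i}$ from the moment formulas already in hand. First I would note that Eq. \eqref{eqmvOU} is a linear stochastic differential equation, so that conditional on the value $\vec{x}_{0}$ at the start of a branch the trait vector at time $t$ is Gaussian, with moments supplied in Eq. \eqref{eqmvOUmoments}. Reading these moments, the conditional mean $e^{-\mathbf{H}t}\vec{x}_{0}+(\mathbf{I}-e^{-\mathbf{H}t})\vec{\theta}$ is an affine function of $\vec{x}_{0}$ with linear part $\mathbf{\Phi}=e^{-\mathbf{H}t}$ and shift $\vec{\omega}=(\mathbf{I}-e^{-\mathbf{H}t})\vec{\theta}$, while the conditional variance $\mathbf{V}(t)$ does not depend on $\vec{x}_{0}$. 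This verifies conditions (2.a) and (2.b) of Def. \ref{defGLinv}; condition (1) holds because after a speciation event the daughter lineages are driven by independent Wiener processes and therefore evolve independently. Hence the full-dimensional terms are $\tilde{\mathbf{\Phi}}_{i}=e^{-\mathbf{H}t_{i}}$, $\tilde{\vec{\omega}}_{i}=(\mathbf{I}-e^{-\mathbf{H}t_{i}})\vec{\theta}_{i}$ and $\tilde{\mathbf{V}}_{i}=\mathbf{V}(t_{i})$, which are exactly the quantities entering the $(\vec{\omega},\mathbf{\Phi},\mathbf{V})$ parameterization underlying the proof of Thm. \ref{thmpcmNorm}.

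Next I would incorporate measurement error. For a tip $i$ the observed vector is the latent OU endpoint plus an independent, mean-zero Gaussian error with covariance $\mathbf{\Sigma}_{e}^{i}$. Independence leaves the conditional mean unchanged but adds to the conditional variance, so $\tilde{\mathbf{V}}_{i}=\mathbf{V}(t_{i})+\delta_{i\in\{\mathbb{T}_{0}'s~tips\}}\mathbf{\Sigma}_{e}^{i}$, precisely the $\tilde{\mathbf{V}}_{i}$ defined just before the theorem, while $\tilde{\vec{\omega}}_{i}$ and $\tilde{\mathbf{\Phi}}_{i}$ are untouched.

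Finally I would pass to the active coordinates using the restriction procedure of Section \ref{sbsecNAs}, Eq. \eqref{eqTreatNAs}, which gives $\vec{\omega}_{i}=\tilde{\vec{\omega}}_{i}[\vec{k}_{i}]$, $\mathbf{\Phi}_{i}=\tilde{\mathbf{\Phi}}_{i}[\vec{k}_{i},\vec{k}_{j}]$ and $\mathbf{V}_{i}=\tilde{\mathbf{V}}_{i}[\vec{k}_{i},\vec{k}_{i}]$. The formulas for $\mathbf{\Phi}_{i}$ and $\mathbf{V}_{i}$ are then immediate. For $\vec{\omega}_{i}$ the only rewriting needed is the elementary row-subsetting identity $(\mathbf{M}\vec{v})[\vec{k}_{i}]=\mathbf{M}[\vec{k}_{i},\vec{k}]\,\vec{v}[\vec{k}]$, valid because selecting rows of a matrix--vector product equals selecting the same rows of the matrix while keeping all columns (indexed by the full set $\vec{k}$, with $\vec{v}[\vec{k}]=\vec{v}$); applied to $\tilde{\vec{\omega}}_{i}=(\mathbf{I}-e^{-\mathbf{H}t_{i}})\vec{\theta}_{i}$ this yields $(\mathbf{I}[\vec{k}_{i},\vec{k}]-e^{-\mathbf{H}t_{i}}[\vec{k}_{i},\vec{k}])\vec{\theta}_{i}[\vec{k}]$, matching Eq. \eqref{eqmvOUVoP}.

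I expect the only genuinely delicate point to be the marginalization underlying Eq. \eqref{eqTreatNAs}: I must argue that restricting the conditional Gaussian to the active coordinates $\vec{k}_{i}$ (i.e. marginalizing away the non-existing traits) produces exactly the sub-vector of the mean and the principal submatrix of the covariance, and that the linear dependence on $\vec{x}_{j}$ survives this restriction as the column selection $[\vec{k}_{i},\vec{k}_{j}]$. This is the standard fact that marginals of a jointly Gaussian vector are obtained by deleting rows and columns, combined with the affine structure of the mean; once invoked, the three displayed identities follow directly. A secondary bookkeeping point is the limiting evaluation $\lambda^{-1}(1-e^{-\lambda t})\to t$ when $\lambda_{i}=\lambda_{j}=0$, already handled in the text, which guarantees $\mathbf{V}(t)$ and hence $\mathbf{V}_{i}$ remain well defined and nonsingular for $t>0$.
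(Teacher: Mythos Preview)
Your proposal is correct and follows essentially the same approach as the paper: identify the conditional distribution of $\vec{x}_{i}$ given $\vec{x}_{j}$ as multivariate normal with the mean and variance supplied by Eq.~\eqref{eqmvOUmoments}, then restrict to the active coordinates. The paper's own proof is in fact a single line---it simply writes down $pdf(\vec{x}_{i}\vert \vec{x}_{j},t_{i})=\mathcal{N}\big(e^{-\mathbf{H}t_{i}}[\vec{k}_{i},\vec{k}_{j}]\vec{x}_{j}+(\mathbf{I}[\vec{k}_{i},\vec{k}]-e^{-\mathbf{H}t_{i}}[\vec{k}_{i},\vec{k}])\vec{\theta}_{i}[\vec{k}],\mathbf{V}_{i}[\vec{k}_{i},\vec{k}_{i}]\big)$---so your verification of the $\mathcal{G}_{LInv}$ conditions, the explicit treatment of measurement error, and the justification of the row-subsetting identity for $\vec{\omega}_{i}$ are all additional detail beyond what the paper provides, but entirely in the same spirit.
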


\begin{proof}
In the multivariate OU case, Eq. \eqref{pdfXiXj} will be

\bd
pdf(\vec{x}_{i}\vert \vec{x}_{j},t_{i})=
\mathcal{N}\left(
e^{-\mathbf{H} t_{i}}[\vec{k}_{i},\vec{k}_{j}]\vec{x}_{j} + 
\left(\mathbf{I}[\vec{k}_{i},\vec{k}]-e^{-\mathbf{H} t_{i}}[\vec{k}_{i},\vec{k}]\right)\vec{\theta}_{i}[\vec{k}],\mathbf{V}_{i}[\vec{k}_{i},\vec{k}_{i}]\right).
\ed

\end{proof}
These formulae do not depend on whether the eigenvalues of $\mathbf{H}$
are positive, negative or $0$. They will still be correct.
The exponentiation of $\mathbf{H}$ will also not depend on this. Only with
$\mathbf{V}_{i}$ will we need to take an appropriate limit as an eigenvalue is $0$,
see comments after Eq. \eqref{eqmvOUmoments}.

\begin{corollary}
For a multivariate Brownian motion process of evolution, we have
$\mathbf{H}=\mathbf{0}$ and 
$\tilde{\mathbf{V}}_{i}=t_{i}\mathbf{\Sigma} + \delta_{i\in\{\mathbb{T}_{0}'s~ tips\}}\mathbf{\Sigma}_{e}^{i}$.
Hence, using the parametrization found in the proof of Thm. \ref{thmpcmNorm}
one can represent it as

\be
\begin{array}{rcl}\label{eqmvBMVoP}
\mathbf{V}_{i}& = & \tilde{\mathbf{V}}_{i}[\vec{k}_{i},\vec{k}_{i}] \in \mathbb{R}^{\vert \vec{k}_{i} \vert \times \vert \vec{k}_{i} \vert },\\
\vec{\omega}_{i} & = & \vec{0}[\vec{k}_{i}] \in \mathbb{R}^{\vert \vec{k}_{i} \vert},\\
\mathbf{\Phi}_{i} & = & \mathbf{I}[\vec{k}_{i},\vec{k}_{j}] \in \mathbb{R}^{\vert \vec{k}_{i} \vert \times \vert \vec{k}_{j} \vert }.
\end{array}
\ee
\end{corollary}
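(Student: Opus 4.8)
The plan is to obtain this Corollary as a direct specialization of Theorem \ref{thmmvOU} to the degenerate case $\mathbf{H}=\mathbf{0}$, since substituting $\mathbf{H}=\mathbf{0}$ into the defining SDE \eqref{eqmvOU} yields $\ud\vec{x}(t)=\mathbf{\Sigma}_{x}\ud\vec{W}(t)$, which is precisely multivariate Brownian motion. Thus no new integration over the tree is required; I would simply insert $\mathbf{H}=\mathbf{0}$ into the three parameterization formulae \eqref{eqmvOUVoP} and simplify, then feed the results into \eqref{eqAbCdEfVoP} to recover the composite coefficients of Theorem \ref{thmpcmNorm}.

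First I would compute the matrix exponential at $\mathbf{H}=\mathbf{0}$: since $e^{-\mathbf{0}\cdot t_{i}}=\mathbf{I}$, the transition operator $\mathbf{\Phi}_{i}=e^{-\mathbf{H}t_{i}}[\vec{k}_{i},\vec{k}_{j}]$ collapses to $\mathbf{I}[\vec{k}_{i},\vec{k}_{j}]$, giving the claimed $\mathbf{\Phi}_{i}$. Feeding the same identity into the mean term, the factor $\mathbf{I}[\vec{k}_{i},\vec{k}]-e^{-\mathbf{H}t_{i}}[\vec{k}_{i},\vec{k}]$ becomes the zero matrix, so $\vec{\omega}_{i}=\vec{0}[\vec{k}_{i}]$ irrespective of $\vec{\theta}_{i}$; this reflects the fact that a Brownian motion has no deterministic pull towards an optimum.

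Next I would handle the covariance. Here I would invoke the $\mathbf{H}=\mathbf{0}$ form of the moment formula \eqref{eqmvOUmoments}, taking the eigendecomposition trivially as $\mathbf{P}=\mathbf{I}$ and $\mathbf{\Lambda}=\mathbf{0}$. Every diagonal-sum $\lambda_{i}+\lambda_{j}$ then vanishes, and the bracketed scalar must be read in the limiting sense $\lambda^{-1}(1-e^{-\lambda t})\to t$ already noted after \eqref{eqmvOUmoments}; this sends each entry of the bracket matrix to $t_{i}$, so its Hadamard product with $\mathbf{P}^{-1}\mathbf{\Sigma}\mathbf{P}^{-T}=\mathbf{\Sigma}$ equals $t_{i}\mathbf{\Sigma}$. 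Adding the tip-specific measurement-error term produces $\tilde{\mathbf{V}}_{i}=t_{i}\mathbf{\Sigma}+\delta_{i\in\{\mathbb{T}_{0}'s~ tips\}}\mathbf{\Sigma}_{e}^{i}$, and restriction to active coordinates yields $\mathbf{V}_{i}=\tilde{\mathbf{V}}_{i}[\vec{k}_{i},\vec{k}_{i}]$, exactly the entries of \eqref{eqmvBMVoP}.

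The computation is entirely routine; the only point demanding care---and the one I would flag explicitly---is the singular limit in the covariance, where the OU bracket $\frac{1}{\lambda_{i}+\lambda_{j}}(1-e^{-(\lambda_{i}+\lambda_{j})t_{i}})$ is a $0/0$ indeterminate at $\mathbf{H}=\mathbf{0}$ and must be resolved by the stated limit rather than by naive substitution. Once that limit is applied consistently, the remaining steps are immediate algebraic simplifications of the Ornstein--Uhlenbeck expressions, and the claimed Brownian-motion parameterization follows.
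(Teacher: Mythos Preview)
Your proposal is correct and follows exactly the route the paper intends: the Corollary is presented in the paper without an explicit proof, as an immediate specialization of Theorem~\ref{thmmvOU} obtained by setting $\mathbf{H}=\mathbf{0}$. Your derivation---using $e^{-\mathbf{0}t_{i}}=\mathbf{I}$ for $\mathbf{\Phi}_{i}$ and $\vec{\omega}_{i}$, and the limit $\lambda^{-1}(1-e^{-\lambda t})\to t$ already flagged after Eq.~\eqref{eqmvOUmoments} for the covariance---is precisely how the paper expects the reader to fill in the gap.
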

In Fig. \ref{figPCMsim}, panel C, one can see an example collection of tip observations
resulting from simulating of a bivariate trait following a BM process on top of a phylogeny and in panel D following an OU process.

\subsection[Multivariate Ornstein--Uhlenbeck with jumps]{Multivariate Ornstein--Uhlenbeck with jumps}\label{sbsecOUjumps}
It is an ongoing debate in evolutionary biology at what time does evolutionary change take place. 
Change may take place either at times of speciation 
\citep[punctuated equilibrium][]{Eldredge:1972pe,Gould:1993iy}
or gradually accumulate \citep[phyletic gradualism, see references in][]{Eldredge:1972pe}. 
There seems to be evidence for both types of evolution.
For example, \citet{Bokma:2002hk} discusses that punctuated equilibrium is supported by fossil records \citep[see][]{Eldredge:1972pe}
but on the other hand \citet{Stebbins:1981ck} also indicate  
experiments supporting phyletic gradualism.

At an internal node in the tree something happens that drives species apart
and then ``The further removed in time a species from the original
speciation event that originated it, the more its genotype will have become stabilized and the more
it is likely to resist change.'' \citep{Mayr:1982et}. Between branching events (and jumps) we can 
have stasis---``fluctuations of little or no accumulated consequence'' taking place
\citep{Gould:1993iy}.
Therefore, one would want processes that incorporate both types of evolution and allow for testing if either
of them dominates. Ornstein--Uhlenbeck with jumps models are a framework where this is possible. 
Shortly, along a branch the traits follows an OU process. But then, just after speciation, a jump in the 
traits' values can take place. Whether such a jump takes place on a given, some or all daughter
lineages is up to the specific implementation of the framework. From the perspective of the 
\pkg{PCMBase} package the location of the jumps has to be provided. 
It is in fact also possible in our implementation, to place jumps at arbitrary points inside a branch
(this may be necessary if speciation events are missing due to unsampled extinct or extant species). 
Inferring of where jumps took place are at a different level of PCM modelling, then what
\pkg{PCMBase} handles.

Ornstein--Uhlenbeck processes with jumps capture the key idea behind the theory of punctuated
equilibrium. 
If the speed of convergence of the process
is large enough, then the stationary distribution is approached rapidly and the stationary oscillations around
the (constant) mean can be interpreted as stasis between jumps.

\begin{corollary}\label{corOUjumps}
For a multivariate OU 
defined with jumps, jump distribution $\mathcal{N}(\vec{\mu}_{J},\mathbf{\Sigma}_{J})$ and denoting
by the indicator $\xi_{i}$ (we assume that the jumps are known) if a jump took place at the start of the branch leading to node $i$, 
we have 

\be\label{eqmvJOUmoments}
\begin{array}{rcl}
\tilde{\mathbf{V}}_{i} & = & \int\limits_{0}^{t_{i}}e^{-\mathbf{H}v}\mathbf{\Sigma}e^{-\mathbf{H}^{T}v} \ud v + \xi_{i} e^{-\mathbf{H}t_{i}}\mathbf{\Sigma}_{J}e^{-\mathbf{H}^{T}t_{i}} + \delta_{i\in\{\mathbb{T}_{0}'s~ tips\}}\mathbf{\Sigma}_{e}^{i}.
\end{array}
\ee
Using the parametrization found in the proof of Thm. \ref{thmpcmNorm} one can represent it as

\be
\label{eqmvJOUVoP}
\begin{array}{rcl}
\mathbf{V}_{i} & = & \tilde{\mathbf{V}}_{i}[\vec{k}_{i},\vec{k}_{i}] \in \mathbb{R}^{\vert \vec{k}_{i} \vert \times \vert \vec{k}_{i} \vert },\\
\vec{\omega}_{i} & = & \xi_{i}e^{-\mathbf{H} t_{i}}[\vec{k}_{i},\vec{k}]\vec{\mu}_{J}[\vec{k}]+ \left(\mathbf{I}[\vec{k}_{i},\vec{k}]-e^{-\mathbf{H} t_{i}}[\vec{k}_{i},\vec{k}]\right)\vec{\theta}_{i}[\vec{k}] \in \mathbb{R}^{\vert \vec{k}_{i} \vert},\\
\mathbf{\Phi}_{i} & = & e^{-\mathbf{H} t_{i}}[\vec{k}_{i},\vec{k}_{j}] \in \mathbb{R}^{\vert \vec{k}_{i} \vert \times \vert \vec{k}_{j} \vert }.
\end{array}
\ee

\end{corollary}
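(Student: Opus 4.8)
The plan is to build on Theorem \ref{thmmvOU} by treating the jump as an additional, independent Gaussian increment imposed at the very start of the branch, before the deterministic OU relaxation acts. Concretely, I would model the single branch leading to node $i$ as follows: given the ancestral value $\vec{x}_{j}$ at the parent node $j$, the process first jumps (only if $\xi_{i}=1$) to $\vec{x}_{j}+\xi_{i}\vec{J}$, where $\vec{J}\sim\mathcal{N}(\vec{\mu}_{J},\mathbf{\Sigma}_{J})$ is drawn independently of the Wiener process $\vec{W}$ driving Eq. \eqref{eqmvOU}, and then evolves as an OU process over the branch length $t_{i}$ starting from this post--jump value. The key observation is that the jump enters the OU flow as a perturbation of the initial condition, so it is carried to the branch end through the same deterministic factor $e^{-\mathbf{H}t_{i}}$ that acts on $\vec{x}_{j}$.

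With this setup I would compute the conditional moments by substituting the random initial value $\vec{x}_{j}+\xi_{i}\vec{J}$ into the OU moment formulas of Eq. \eqref{eqmvOUmoments} and taking expectations over $\vec{J}$ and the diffusion. By linearity, the conditional mean becomes $e^{-\mathbf{H}t_{i}}\vec{x}_{j}+\xi_{i}e^{-\mathbf{H}t_{i}}\vec{\mu}_{J}+(\mathbf{I}-e^{-\mathbf{H}t_{i}})\vec{\theta}_{i}$, from which I read off that $\mathbf{\Phi}_{i}=e^{-\mathbf{H}t_{i}}$ is unchanged relative to the pure OU case and that the mean shift $\vec{\omega}_{i}$ acquires the extra summand $\xi_{i}e^{-\mathbf{H}t_{i}}\vec{\mu}_{J}$, matching Eq. \eqref{eqmvJOUVoP}. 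For the variance, independence of the jump from the diffusion lets me add the two contributions: the OU diffusion gives $\int_{0}^{t_{i}}e^{-\mathbf{H}v}\mathbf{\Sigma}e^{-\mathbf{H}^{T}v}\,\ud v$, while the propagated jump contributes $\xi_{i}^{2}e^{-\mathbf{H}t_{i}}\mathbf{\Sigma}_{J}e^{-\mathbf{H}^{T}t_{i}}$. Using that $\xi_{i}$ is an indicator, so $\xi_{i}^{2}=\xi_{i}$, and appending the measurement--error term $\mathbf{\Sigma}_{e}^{i}$ for tips, I recover exactly $\tilde{\mathbf{V}}_{i}$ as in Eq. \eqref{eqmvJOUmoments}.

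Finally I would verify membership in the $\mathcal{G}_{LInv}$ family: the branch--end value is a sum of two independent Gaussians, hence Gaussian; its mean is the affine expression above; and its covariance $\tilde{\mathbf{V}}_{i}$ does not depend on $\vec{x}_{j}$. Theorem \ref{thmpcmNorm} then applies verbatim, so the quadratic--polynomial coefficients follow from $\vec{\omega}_{i},\mathbf{\Phi}_{i},\mathbf{V}_{i}$ through Eq. \eqref{eqAbCdEfVoP}, and restricting to the active coordinates via Eq. \eqref{eqTreatNAs} produces the bracketed submatrices in Eq. \eqref{eqmvJOUVoP}. I expect the only delicate point to be the bookkeeping of where the jump sits on the branch: because it precedes the deterministic relaxation it must be conjugated by $e^{-\mathbf{H}t_{i}}$ in \emph{both} the mean and the variance, and getting this placement right---rather than, say, adding $\mathbf{\Sigma}_{J}$ undamped at the branch end---is what distinguishes the correct formula from a naive one.
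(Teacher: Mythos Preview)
Your proposal is correct and follows exactly the route the paper intends: the corollary is stated in the paper without an explicit proof, as an immediate consequence of Theorem~\ref{thmmvOU} once the normally distributed jump is inserted as an independent perturbation of the initial condition and propagated through the OU flow by $e^{-\mathbf{H}t_{i}}$. Your bookkeeping of the mean shift, the conjugated jump covariance, the use of $\xi_{i}^{2}=\xi_{i}$, and the final restriction to active coordinates via Eq.~\eqref{eqTreatNAs} are all on target.
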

The multivariate Brownian motion with jumps model follows as an immediate corollary $(\mathbf{H} \to \mathbf{0})$ .

\begin{corollary}
For a multivariate Brownian motion with jumps (jumps defined the same as in Corollary \ref{corOUjumps})
the variance at a node $i$ is $\tilde{\mathbf{V}}_{i}=t_{i}\mathbf{\Sigma}[\vec{k}_{i},\vec{k}_{i}]+\xi_{i} \mathbf{\Sigma}_{J}[\vec{k}_{i},\vec{k}_{i}] 
+ \delta_{i\in\{\mathbb{T}_{0}'s~ tips\}}\mathbf{\Sigma}_{e}^{i}$. 
Using the parametrization found in the proof of Thm. \ref{thmpcmNorm} one can represent it as

\be
\begin{array}{rcl}
\mathbf{V}_{i} & = & \tilde{\mathbf{V}}_{i}[\vec{k}_{i},\vec{k}_{i}] \in \mathbb{R}^{\vert \vec{k}_{i} \vert \times \vert \vec{k}_{i} \vert },\\
\vec{\omega}_{i} & = & \xi_{i}\vec{\mu}_{J}[\vec{k}_{i}] \in \mathbb{R}^{\vert \vec{k}_{i} \vert},\\
\mathbf{\Phi}_{i} & = & \mathbf{I}[\vec{k}_{i},\vec{k}_{j}] \in \mathbb{R}^{\vert \vec{k}_{i} \vert \times \vert \vec{k}_{j} \vert }.
\end{array}
\ee
\end{corollary}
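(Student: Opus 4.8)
The plan is to obtain this corollary directly as the $\mathbf{H}\to\mathbf{0}$ specialization of Corollary \ref{corOUjumps}, exactly as announced by the sentence ``The multivariate Brownian motion with jumps model follows as an immediate corollary $(\mathbf{H}\to\mathbf{0})$'' preceding the statement. First I would recall that when $\mathbf{H}=\mathbf{0}$ the matrix exponential collapses to the identity, $e^{-\mathbf{H}t}=\mathbf{I}$ for every $t$, so every occurrence of $e^{-\mathbf{H}t_{i}}$ and $e^{-\mathbf{H}^{T}t_{i}}$ in Eqs. \eqref{eqmvJOUmoments} and \eqref{eqmvJOUVoP} may be replaced by $\mathbf{I}$.

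Next I would substitute this into the expression for $\tilde{\mathbf{V}}_{i}$ from Eq. \eqref{eqmvJOUmoments}. The diffusion integral simplifies as $\int_{0}^{t_{i}} e^{-\mathbf{H}v}\mathbf{\Sigma}e^{-\mathbf{H}^{T}v}\,\ud v = \int_{0}^{t_{i}}\mathbf{\Sigma}\,\ud v = t_{i}\mathbf{\Sigma}$, the jump term becomes $\xi_{i}e^{-\mathbf{H}t_{i}}\mathbf{\Sigma}_{J}e^{-\mathbf{H}^{T}t_{i}}=\xi_{i}\mathbf{\Sigma}_{J}$, and the measurement-error term $\delta_{i\in\{\mathbb{T}_{0}'s~ tips\}}\mathbf{\Sigma}_{e}^{i}$ is unchanged. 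Restricting to the active coordinates $\vec{k}_{i}$ of node $i$ then yields the claimed $\tilde{\mathbf{V}}_{i}=t_{i}\mathbf{\Sigma}[\vec{k}_{i},\vec{k}_{i}]+\xi_{i}\mathbf{\Sigma}_{J}[\vec{k}_{i},\vec{k}_{i}]+\delta_{i\in\{\mathbb{T}_{0}'s~ tips\}}\mathbf{\Sigma}_{e}^{i}$, and $\mathbf{V}_{i}=\tilde{\mathbf{V}}_{i}[\vec{k}_{i},\vec{k}_{i}]$ is then immediate from the first line of Eq. \eqref{eqmvJOUVoP}.

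I would then treat the composite parameters $\mathbf{\Phi}_{i}$ and $\vec{\omega}_{i}$. For $\mathbf{\Phi}_{i}$, substituting $e^{-\mathbf{H}t_{i}}=\mathbf{I}$ into $\mathbf{\Phi}_{i}=e^{-\mathbf{H}t_{i}}[\vec{k}_{i},\vec{k}_{j}]$ gives $\mathbf{\Phi}_{i}=\mathbf{I}[\vec{k}_{i},\vec{k}_{j}]$. For $\vec{\omega}_{i}$, the deterministic drift contribution $\left(\mathbf{I}[\vec{k}_{i},\vec{k}]-e^{-\mathbf{H}t_{i}}[\vec{k}_{i},\vec{k}]\right)\vec{\theta}_{i}[\vec{k}]$ vanishes because the two bracketed matrices coincide when $\mathbf{H}=\mathbf{0}$, while the jump contribution reduces to $\xi_{i}\mathbf{I}[\vec{k}_{i},\vec{k}]\vec{\mu}_{J}[\vec{k}]=\xi_{i}\vec{\mu}_{J}[\vec{k}_{i}]$; the only point to verify here is that left-multiplying the full vector $\vec{\mu}_{J}[\vec{k}]$ by the row-restricted identity $\mathbf{I}[\vec{k}_{i},\vec{k}]$ simply selects the entries indexed by $\vec{k}_{i}$, which follows directly from the coordinate-selection notation introduced in Section \ref{sbsecNAs}.

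Since all of these formulas are smooth in the entries of $\mathbf{H}$, no delicate limiting argument is required and there is no real obstacle. The only mild care concerns the degenerate-eigenvalue handling of the diffusion integral flagged after Eq. \eqref{eqmvOUmoments}, namely the limit $\lambda^{-1}(1-e^{-\lambda t})\to t$ as $\lambda\to 0$; in the present $\mathbf{H}=\mathbf{0}$ case this is trivial, as the integrand is the constant $\mathbf{\Sigma}$ and the integral is manifestly $t_{i}\mathbf{\Sigma}$.
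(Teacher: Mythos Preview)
Your proposal is correct and matches the paper's approach exactly: the paper states the corollary without an explicit proof, merely noting just before it that ``The multivariate Brownian motion with jumps model follows as an immediate corollary $(\mathbf{H}\to\mathbf{0})$,'' and your argument simply spells out this specialization term by term. There is nothing to add.
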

In Fig. \ref{figPCMsim}, panel E, one can see an example collection of tip observations
resulting from simulating of a bivariate trait following an OU process 
with jumps on top of a phylogeny.

\subsection[Beyond the Ornstein--Uhlenbeck process]{Beyond the Ornstein--Uhlenbeck process}\label{sbsecBeyondOU}
There are a number of popular PCM models that do not fall into the above described OU framework despite
appearing very similar. In particular we mean the BM with trend, drift, early burst/Accelerating--decelerating (EB/ACDC)
or white noise \citep[implemented in the \pkg{geiger} \proglang{R} package][]{Harmon:2008iy}. 
With the exception of white noise, they all can be represented by the SDE 
\citep[cf. Eq. (1) of][]{Manceau:2016kz}
\VMcom{Is the matrix H below the same as the matrix H for the OU process? I saw that Manceau et al. uses the symbol A for the same parameter. Using the symbol H is confusing here. Please, use another one, or clarify that this parameter has the same meaning as the H in the OU process.}
\be
\left\{
\begin{array}{rcl}
\ud \vec{x}(t) & = & \left(\vec{h}(t)-\mathbf{H}\vec{x}(t) \right)\ud t + \mathbf{\Gamma}(t) \ud \vec{W}(t), \\
\vec{x}(0)  & = & \vec{x}_{0}.
\end{array}
\right.
\ee
Notice that setting $\vec{h}(t)=\vec{\theta}$ and $\mathbf{\Gamma}(t) = \mathbf{\Sigma}_{x}$ to constants we recover
 the ``usual'' OU process, considered in Thm. \ref{thmmvOU}.
\citet{Manceau:2016kz} provide the expectation and variance under the model, by slightly \VMcom{``slightly'' ? please, explain. Is it just notation change or something in the maths?}
modifying their Eqs. (4a) and (4b),

\be
\begin{array}{rcl}
\Expectation{\vec{x}_{i} \vert \vec{x}_{j}}  & = &
e^{-t_{i} \mathbf{H}_{i}}\vec{x}_{j} + \int\limits_{t_{i}^{s}}^{t_{i}^{e}}e^{(s-t_{i}^{e})\mathbf{H}_{i}}\vec{h}_{i}(s) \ud s, \\
\var{\vec{x}_{i} \vert \vec{x}_{j}}  & = &
\int\limits_{t_{i}^{s}}^{t_{i}^{e}}e^{(s-t_{i}^{e})\mathbf{H}_{i}}\mathbf{\Gamma}_{i}(s) \mathbf{\Gamma}_{i}^{T}(s) e^{(s-t_{i}^{e})\mathbf{H}_{i}^{T}}\ud s,
\end{array}
\ee
where $t_{i}^{s}$ is the time at the start of the branch and $t_{i}^{e}$ at the end (of course 
$t_{i}=t_{i}^{e} - t_{i}^{s}$).
This corresponds in our framework to 

\be\label{eqoPVManceau}
\begin{array}{rcl}
\vec{\omega}_{i} & = & \int\limits_{t_{i}^{s}}^{t_{i}^{e}}e^{(s-t_{i}^{e})\mathbf{H}_{i}}\vec{h}_{i}(s) \ud s, \\
\mathbf{\Phi}_{i} & = & e^{-t_{i} \mathbf{H}_{i}}, \\
\mathbf{V}_{i} & = & \int\limits_{t_{i}^{s}}^{t_{i}^{e}}e^{(s-t_{i}^{e})\mathbf{H}_{i}}\mathbf{\Gamma}_{i}(s) \mathbf{\Gamma}_{i}^{T}(s) e^{(s-t_{i}^{e})\mathbf{H}_{i}^{T}}\ud s.
\end{array}
\ee
\VMcom{From this paragraph, I cannot see how to implement the ACDC and the BM with trend models. 
What is the meaning of these models in biological terms? I remember that ACDC stays for acceleration/deceleration. 
What do BM with drift and BM with trend mean? Maybe split this paragraph into separate paragraphs for each of model?} 
\KBcom{In the first case the drift coefficient changes linearly with time, in the second the diffusion.
But what does this mean biologically? Depends on the system under study I guess, beyond that 
I cannot say much more. 
Below you have the equations for omega, Phi, V given for ACDC, BM with drift. 
For BM with drift it is written in the form that is compatible with all possibilities, as I cannot make out from geiger's
description what exactly they had in mind.
I do not think that we have enough to write for more than this fragment.
} \VMcom{Why citing geiger's user manual? This is just an implementation and not an original description of the models. For example, for ACDC, I've found that the original publication is the paper from Blomberg from 2003. Also for ACDC, there is no matrix parameter R, but a scalar parameter r in Manceau et. al. If we are doing a generalization from a scalar to a matrix, then this should be described clearly.}
Hence, \emph{in the subcase of non--interacting lineages}, our framework 
covers \citet{Manceau:2016kz}'s. As the initially mentioned models are subcases \citep[cf. Tab. 1 of][]{Manceau:2016kz}
they are available in our framework. To obtain the values of the $\vec{\omega}_{i}$, $\mathbf{\Phi}_{i}$
and $\mathbf{V}_{i}$ parameters in Eq. \eqref{eqoPVManceau} one has to either analytically calculate
the integrals for specific $\vec{h}_{i}(\cdot)$ and $\mathbf{\Gamma}(\cdot)$ functions
or consider a general numerical integration scheme.

Apart from the previously considered OU model, for some other typical PCM models the integrals can be evaluated analytically.
\begin{enumerate}
\item 
ACDC model
\citep[after generalizing the one dimensional model presented by][to the multivariate case]{Blomberg:2003eg,Harmon:2010gh}

\be
\begin{array}{rcl}
\vec{\omega}_{i} & = & \vec{0}, \\
\mathbf{\Phi}_{i} & = & \mathbf{I}, \\
\mathbf{V}_{i} & = & \int_{t_{i}^{s}}^{t_{i}^{e}}e^{s\mathbf{R}_{i}}\mathbf{\Sigma}_{i} \mathbf{\Sigma}_{i}^{T} e^{s\mathbf{R}_{i}^{T}}\ud s.
\end{array}
\ee
See Eq. \eqref{eqmvOUmoments} for how to calculate the integral, for $\mathbf{V}_{i}$, when the matrix $\mathbf{R}$ is eigendecomposable.
\item BM with drift

\be
\begin{array}{rcl}
\vec{\omega}_{i} & = & \vec{h}_{i}t_{i}, \\
\mathbf{\Phi}_{i} & = & \mathbf{I}, \\
\mathbf{V}_{i} & = & \mathbf{\Sigma}_{i} \mathbf{\Sigma}_{i}^{T}t_{i} .
\end{array}
\ee
\item BM with trend---in the most general setup of a linear form under the integral for $\mathbf{V}_{i}$
\citep[based on the Supporting Information of][in the one dimensional case]{Harmon:2010gh}

\be
\begin{array}{rcl}
\vec{\omega}_{i} & = & \vec{0}, \\
\mathbf{\Phi}_{i} & = & \mathbf{I}, \\
\mathbf{V}_{i} & = & \int^{t_{i}^{s}+t_{i}}_{t_{i}^{s}}\left(\mathbf{U}s+\mathbf{W}\right) \ud s = \mathbf{U}\frac{t_{i}^{2}}{2}+\mathbf{W}t_{i}.
\end{array}
\ee
\item The white noise process corresponds to a situation, where \VM{
the phylogeny does not contribute to the covariance structure between the species, so that 
the all species are regarded as independent identically distributed observations of the same 
multivariate Gaussian distribution with global mean $\vec{x}_{0}=\vec{\mu}$ and same 
variance--covariance matrix $\mathbf{\Sigma_e}$.}
\end{enumerate}
Naturally everything should be appropriately (as described in Section \ref{sbsecNAs}) adjusted if
missing values are present.

\section[Technical correctness]{Technical correctness}\label{secPostQuant}
Validating the technical correctness is an important but often neglected step in the development of 
likelihood calculation software. This step is particularly relevant for complex multivariate models, 
because logical errors can occur in many levels, such as the mathematical equations for the different 
terms involved in the likelihood, the programming code implementing these equations, the code responsible 
for the tree traversal, the parametrization of the model and the preprocessing of the input data. 
These logical errors add up to numerical errors caused by limited floating point precision, which 
can be extremely hard to identify. Ultimately, these errors lead to wrong likelihood values, 
false parameter inference and wrong analysis. All these concerns motivate for a systematic 
approach of testing the correctness of the software. 

We implemented a technical correctness test of the three models currently implemented in \pkg{PCMBase} 
using the method of posterior quantiles proposed by \citet{Cook:2006km}. The posterior quantiles method 
(Alg. \ref{algPostQuant}) 
is a simulation based approach. It employs the fact that, for a fixed prior distribution of the model parameters, 
the sample of posterior quantiles of any model parameter, $\theta$ 
is uniform \citep[see e.g.][for details]{Cook:2006km,Mitov:2017eg}.
Thus, any deviation from 
uniformity of the posterior quantile sample for any of the model parameters indicates the presence of an error, 
either in the simulation software, or in the likelihood calculator used to generate the posterior samples. 

\TScom{put heading to the procedure.}
\KBcom{Venelin is this OK?}
\begin{algorithm}[!htp]
\caption{Posterior quantiles method}\label{algPostQuant}
\begin{algorithmic}[1]
\State Sample ``true'' parameters $\Theta$ from the prior; 
\State Simulate random data, $\mathbf{X}_{\Theta}$, under the model specified by $\Theta$;
\State Generate a sample $S_{\theta}$ from the posterior distribution 
$P_{\theta}=P(\theta\vert \mathbf{X}_{\Theta})$;
\State Calculate the empirical quantile of the ``true'' $\theta$ in $S_{\theta}$;
\end{algorithmic}
\end{algorithm}

We used a fixed non--ultrametric tree of $N=515$ tips with two regimes ``a'' and ``b''. The tree was generated 
using the functions \code{pbtree()} and \code{sim.history()} from the package \pkg{phytools}  \citep{Revell:2011ku}. 
We implemented the posterior quantile test using the \pkg{BayesValidate} \proglang{R}--package \citep{Cook:2006km}. 
For each model we set a parametrization and a prior distribution as follows:
\begin{itemize}
\item BM \linebreak 
\noindent 3 parameters: $\Theta_{BM}=[\Sigma_{11},\Sigma_{12},\Sigma_{e,11}]$, such that
\[{\mathbf{\Sigma} _a} = \left( {\begin{array}{*{20}{c}}
{\Sigma_{11}}&{\Sigma_{12}}\\
{\Sigma_{12}}&{\Sigma_{11}}
\end{array}} \right),\,{\mathbf{\Sigma} _b} = \left( {\begin{array}{*{20}{c}}
{\Sigma_{11}}&{0}\\
{0}&{\Sigma_{11}}
\end{array}} \right),\,{\mathbf{\Sigma} _{e,a}} = {\mathbf{\Sigma} _{e,b}} = \left( {\begin{array}{*{20}{c}}
{\Sigma_{e,11}}&{0}\\
{0}&{\Sigma_{e,11}}
\end{array}} \right)\]
\noindent prior: $\Sigma_{11}\sim\text{Exp}(1),\,\Sigma_{12}\sim\mathcal{U}(-0.9\Sigma_{11},\,0.9\Sigma_{11}),\,\Sigma_{e,11}\sim\text{Exp}(10)$.
\item OU \linebreak 
\noindent 8 parameters: $\Theta_{OU}=[\Theta_{BM},\theta_{b,1},\theta_{b,2},H_{b,11},H_{b,12},H_{b,22}]$, 
such that $\mathbf{\Sigma}_{a}$, $\mathbf{\Sigma}_{b}$, $\mathbf{\Sigma}_{e,a}$ and $\mathbf{\Sigma}_{e,b}$ 
are defined as for BM and
\[{\vec{\theta} _a} = \left( {\begin{array}{*{20}{c}}
{0}\\
{0}\end{array}}\right ),\,{\vec{\theta} _b} = \left( {\begin{array}{*{20}{c}}
{\theta_{b,1}}\\
{\theta_{b,2}}\end{array}}\right ),\,{\mathbf{H} _a} = \left( {\begin{array}{*{20}{c}}
{0}&{0}\\
{0}&{0}\end{array}}\right ),\,{\mathbf{H} _b} = \left( {\begin{array}{*{20}{c}}
{H_{b,11}}&{H_{b,12}}\\
{H_{b,12}}&{H_{b,11}}
\end{array}} \right)\]
\noindent prior: for parameters in $\Theta_{BM}$ the same prior has been used as for the 
BM model; for the new parameters, the prior has been set as  
$\theta_{b,1}\sim\mathcal{N}(1,.25),\,\theta_{b,2}\sim\mathcal{N}(2,.5),\,H_{b,11}\sim\text{Exp}(1),\,H_{b,22}\sim\text{Exp}(1),\,H_{b,12}\sim\mathcal{U}(-0.9\sqrt{H_{b,11}H_{b,22}},\,0.9\sqrt{H_{b,11}H_{b,22}})$.
\item JOU \linebreak 
\noindent 9 parameters: $\Theta_{JOU}=[\Theta_{OU},\Sigma_{j,11}]$, such that 
$\mathbf{\Sigma}_{a}$, $\mathbf{\Sigma}_{b}$, $\mathbf{\Sigma}_{e,a}$, $\mathbf{\Sigma}_{e,b}$, $\vec{\theta}_{a}$, $\vec{\theta}_{b}$, $\mathbf{H}_{a}$, $\mathbf{H}_{b}$ 
are defined as for OU and
\[{\vec{\mu} _{j,a}} = \left( {\begin{array}{*{20}{c}}
{-\theta_{b,1}}\\
{-\theta_{b,2}}\end{array}}\right ),\,{\vec{\mu} _{j,b}} = \left( {\begin{array}{*{20}{c}}
{\theta_{b,1}}\\
{\theta_{b,2}}\end{array}}\right ),\,{\mathbf{\Sigma} _{j,a}} = \left( {\begin{array}{*{20}{c}}
{1}&{0}\\
{0}&{1}\end{array}}\right ),\,{\mathbf{\Sigma} _{j,b}} = \left( {\begin{array}{*{20}{c}}
{\Sigma_{j,11}}&{0}\\
{0}&{\Sigma_{j,11}}
\end{array}} \right)\]
\noindent prior: for parameters in $\Theta_{OU}$ the same prior has been used as for the OU model; 
for the new parameter, the prior has been set as  $\Sigma_{j,11}\sim\text{Exp}(10)$.
\end{itemize}

For each, model, we ran the function \code{validate()} from the \pkg{BayesValidate} package, setting the number of 
replications to $48$. The results are summarized in Fig. \ref{figPostQuantiles}. All Bonferroni adjusted 
p--values of the absolute $Z_{\theta}$ statistics were above $0.2$, showing that the posterior quantiles did not 
deviate from uniformity \citep[see][for details on $Z_{\theta}$ statistic]{Cook:2006km}. 

\begin{figure}
\begin{center}
\includegraphics[width=0.6\textwidth]{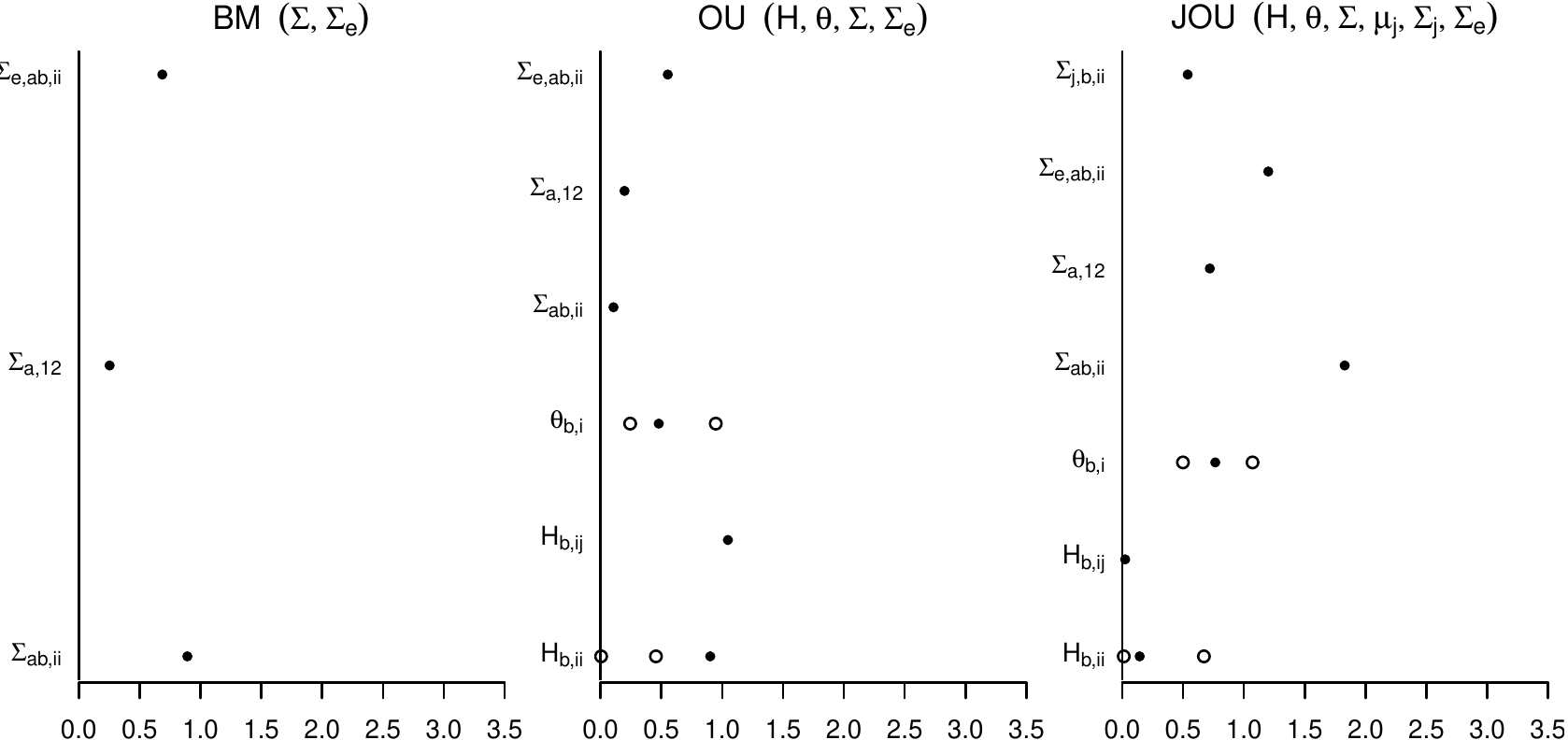}
\end{center}
\caption{Absolute 
$Z_{\theta}$--statistics for the four posterior quantile tests. 
The $Z_{\theta}$ statistic is described by \citet{Cook:2006km}. 
High values indicate deviation from uniformity of the posterior quantile distribution for an 
individual model parameter (circles) or a batch of several model parameters (bullets). The reported values, 
smaller than 3 for all parameters, had insignificant p-values as well as Bonferroni--adjusted p--values. 
The plots were generated using the package \pkg{BayesValidate} \citep{Cook:2006km}.} 
\label{figPostQuantiles}
\end{figure}

\section[Discussion]{Discussion}\label{secDiscussion}
Currently the mathematical frameworks proposed for PCMs are applied to situations
that are very different from the original motivation of a between species analyses \TS{within a small clade} 
of some quantitative trait. They \TS{are employed in many situations with a} tree structure behind the measurements.
For example, \del{the} traits being gene expression levels 
\citep{Bedford:2009pnas,Rohlfs:2013bl} or epidemiological \TS{measurements}
\citep[the tree connects the epidemic's outbursts][]{OPybus:2012pnas} are analysed.
\TScom{Since very recently the tree structure is being relaxed to a network
structure \citep{Jhwueng:2015eb} 
or even a structure with ``tunnels'' between branches where a continuous in time
exchange of signal takes place 
\citep{Bartoszek:2017ez,Drury:2016io, Manceau:2016kz, Nuismer:2015fd}.[this disrupts the flow here, 
move to further below?]}
\VMcom{I suggest to list these situations requiring fast likelihood calculations in one paragraph.} 
\KBcom{I agree but this is how it is done now, right?}
With large and diverse clades, 
there is a need to vary the parameters of the models across different clades or epochs in the tree.
Already e.g. \citet{Bartoszek:2012fw,Butler:2004ce,Hansen:1997ek}
showed the possibility of varying the deterministic optimum of OU processes.
\citet{Beaulieu:2012ex, Eastman:2011jf, Manceau:2016kz} went further to allow all parameters of 
the underlying SDE to vary over the tree. 
\TScom{do the studies up to here assume that we know the shift points? if so, say it.}
\VMcom{Eastman identifies shifts}
\del{A different problem is when the evolutionary shifts are unknown. } 
\TS{Estimating the time and branches for parameter changes has been proposed} \VM{ in }
\citet{Eastman:2011jf}, \citet{Ingram:2013cm}, \citet{Khabbazian:2016fr} and 
\citet{Bastide:2018eq} \TS{with implementations in} \pkg{AUTEUR}, \pkg{SURFACE}, 
\pkg{$l$1ou} and \pkg{PhylogeneticEM} \proglang{R} \del{packages for this} respectively. 

The situations mentioned above can easily 
require likelihood evaluations 
well beyond the amount of a ``standard'' optimization (e.g. 
an exponential number of regime patterns on the tree or reestimation to obtain the estimators' distribution).
Furthermore, the trees \TS{connected to such calculations} \TS{to be analysed} can be huge, going into thousands 
of tips \TS{such as HIV data analysed e.g. in \citep{Hodcroft:2014es,Mitov:2018co}}. \del{species.}
Hence, being able to quickly evaluate the likelihood is crucial. \VM{The \pkg{PCMBase} package offers this 
possibility for fast likelihood calculation for all models in the $\mathcal{G}_{LInv}$-family, 
including mixed--type models, where different types of models are realized on different parts of the phylogenetic tree.}
\TS{Further, it is extremely flexible 
allowing the user to  easily use it as a computational engine for their particular
modelling setup/parametrization. \pkg{PCMBase} is able to handle multiple standard extensions,
allowing the scientist to use all observed data. Finally, the package is written in such a way that it 
can be further developed to include more complex situations.}

\TS{Some ``standard extensions'' from Section \ref{secSpecialIssues} deserve special mention.
Firstly and briefly we remind the reader that \VM{\del{as}} \pkg{PCMBase} handles non--ultrametric
(Section \ref{sbsecNonUltra}) trees. Thus it  can directly use fossil data or pathogen data.}
\del{They can be inputted as a short extinct branches
from some point inside the tree} 
\KBtext{In the same Section \ref{sbsecNonUltra}, we notice that from the perspective of \pkg{PCMBase}
the out--degree of an internal node is irrelevant. This is as the likelihood is
calculated as the product over all daughter clades. Therefore, our computational
engine should be appreciated by users who have poorly resolved trees with polytomies.}
\VM{\del{\TS{Currently, it is assumed that all samples are tips in the tree, thus we do not support sampled ancestor trees 
\cite{Gavryushkina:2014fw}.
However, it will be straightforward to implement internal measurements on the tree if required by the user.}}}
\KBcom{How? I do not see anyway to do this by the user without modifying the package's code by them.
I have no problem with removing the below text, but I do not see why it is straightforward
for the end user.
}
\del{Even though from the mathematical perspective it is straightforward to have 
internal measurements on the tree we did not implement this possibility.
This is for two reasons.
Firstly, not to complicate the user interface and internal structure of the package. 
However, secondly from a biological point of view  there can never be any certainty whether the fossil 
is a direct ancestor of the given clade or a close relative (leading to an extinct lineage) and  exactly where 
to place it. 
Furthermore, including fossils as non--contemporary tips, offers the 
user the flexibility to try out different placements and check for robustness of results. }

\TS{ \pkg{PCMBase} handles incomplete observations of traits, meaning 
partially measured fossils do not pose any problem.}
\KBtext{As mentioned
in Section \ref{sbsecNAs} \pkg{PCMBase} distinguishes two types of missingness, unobserved trait (\code{NA})
and non--existing trait (\code{NaN}). From the perspective of the user this might seem like a mere formality.
However, from the perspective of the likelihood calculations it makes a profound difference.
Unobserved traits are integrated over, meaning that first $\vec{\omega}_{i}$, $\mathbf{V}_{i}$, $\mathbf{\Phi}_{i}$
are calculated as if all $k$ traits were present and only afterwards are appropriate entries/rows and columns removed.
The second case of non--existing traits is treated differently, 
$\vec{\omega}_{i}$, $\mathbf{V}_{i}$, $\mathbf{\Phi}_{i}$ are calculated taking into account that the 
trait vector at the given node is from a lower dimension (i.e. $\mathbf{A}_{i}$, $\vec{b}_{i}$,
$\mathbf{C}_{i}$, $\vec{d}_{i}$ and $\mathbf{E}_{i}$ are taken from lower dimensions by removing 
appropriate entries/rows and columns).} 
\KBcom{VENELIN, please check that I did not miss anything up
and this is how it is implemented, if wrong please correct.}\VMcom{Up to that point, the paragraph is correct, 
but the following sentence is neither tested nor implemented, although it could be of interest. 
I suggest removing it for now.}\KBcom{If you do not feel that comfortable with possible future
extensions, then no problem please remove.}\del{
\KBtext{
This possibility allows the user to study the appearance of traits, \TS{i.e. when a trait appeared or 
disappeared on the tree.}}}  \del{ Because one removes the entries of the entries, rows
and columns prior to doing any of the ``moment calculations'' (see Def. \ref{defGLinv}), this indicates that
the trait(s) disappeared/appeared just after the ancestral node.}

Despite the generality, speed and easiness of use of the package the user has to be aware of a potential pitfall. 
Theorem \ref{thmpcmNorm} and  the proof of Thm. \ref{thmMvFast} indicate a numerical weakness of our method. 
If a branch ending at node $i$ is extremely 
short, then the associated with it variance--covariance matrix, $\mathbf{V}_{i}$, can be computationally singular. 
Hence, calculating its inverse, a necessary step to obtain the likelihood, will not be possible. 
\pkg{PCMBase} catches such an error and returns it, pointing to the offending node. 
As an alternative, it is possible to tolerate such an error: if the branch is shorter than a user-specified threshold 
(runtime options \code{PCMBase.Skip.Singular} and \code{PCMBase.Threshold.Skip.Singular}), 
the whole branch can be treated as a $0$--length branch and skipped during the likelihood calculation. 

All models above assume that the trait evolves on the tree structure, i.e. does not influence the 
branching pattern. However, there are tools such as \pkg{diversitree} which assume that the trait determines 
branching rates. 
Those approaches assume a single model and parameters across the whole tree though and it will be a future challenge 
to generalize them towards multiple regimes on a single tree, as we did here for the $\mathcal{G}_{LInv}$-models.

\section*{Acknowledgments}
V.M. and T.S. were supported ETH Zurich. K.B. was supported by the Knut and Alice Wallenbergs Foundation,
the G S Magnuson Foundation of the Royal Swedish Academy of Sciences 
(grant no. MG$2016$--$0010$) and is supported by the Swedish Research Council 
(Vetenskapsr\aa det) grant no. $2017$--$04951$. G.A. and T.S. thank ETH Zurich for funding.

\bibliography{References}
\bibliographystyle{model2-names}

\end{document}